\documentclass[12pt,a4paper]{article}
\usepackage[T1]{fontenc}
\usepackage[utf8]{inputenc}
\usepackage{mathpazo}
\usepackage[scaled=0.92]{helvet}
\usepackage{microtype}
\usepackage[a4paper,top=1.25in,bottom=1.10in,left=1.15in,right=1.15in,headheight=15pt]{geometry}
\usepackage{setspace}
\usepackage{titlesec}
\titleformat{\section}{\large\bfseries\singlespacing}{\thesection}{1em}{}
\titleformat{\subsection}{\normalsize\bfseries\singlespacing}{\thesubsection}{1em}{}
\titleformat{\subsubsection}{\normalsize\itshape\singlespacing}{\thesubsubsection}{1em}{}
\titleformat{\paragraph}[runin]{\normalsize\bfseries}{}{0pt}{}[.\enspace]
\titlespacing*{\section}     {0pt}{20pt plus 3pt minus 2pt}{8pt plus 1pt}
\titlespacing*{\subsection}  {0pt}{14pt plus 2pt minus 2pt}{4pt plus 1pt}
\titlespacing*{\subsubsection}{0pt}{10pt plus 2pt minus 1pt}{3pt plus 1pt}
\titlespacing*{\paragraph}   {0pt}{7pt  plus 1pt minus 1pt}{0pt}
\usepackage{amsmath,amssymb,amsthm}
\usepackage{mathtools}
\usepackage{bm}
\newtheorem{proposition}{Proposition}
\newtheorem{corollary}{Corollary}[proposition]
\newtheorem{definition}{Definition}
\usepackage{booktabs}
\usepackage{multirow}
\usepackage{array}
\usepackage{adjustbox}
\usepackage{colortbl}

\usepackage{etoolbox}
\AtBeginEnvironment{table}{\singlespacing\small}
\AtBeginEnvironment{figure}{\singlespacing}
\usepackage{caption}
\usepackage{subcaption}
\newcolumntype{L}[1]{>{\raggedright\arraybackslash}p{#1}}
\newcolumntype{C}[1]{>{\centering\arraybackslash}p{#1}}
\newcolumntype{R}[1]{>{\raggedleft\arraybackslash}p{#1}}
\usepackage{graphicx}
\usepackage{float}
\usepackage{pdflscape}
\usepackage[dvipsnames,svgnames,table]{xcolor}
\definecolor{citeblue}{RGB}{31,119,180}
\definecolor{linkgray}{RGB}{80,80,80}
\usepackage[hidelinks,colorlinks=true,linkcolor=linkgray,citecolor=citeblue,urlcolor=citeblue]{hyperref}
\usepackage[authoryear,round,sort]{natbib}
\usepackage{enumitem}
\setlist[itemize]{itemsep=2pt,topsep=4pt,parsep=0pt,leftmargin=1.6em}
\setlist[enumerate]{itemsep=2pt,topsep=4pt,parsep=0pt,leftmargin=1.6em}
\usepackage{footmisc}
\renewcommand{\footnoterule}{\kern-3pt\hrule width 0.38\textwidth \kern 2.6pt}
\usepackage{appendix}
\usepackage{booktabs}
\usepackage{tabularx}
\usepackage{threeparttable}
\usepackage{multirow}
\usepackage{array}

\usepackage{graphicx}
\usepackage{pgfplots}
\pgfplotsset{compat=1.18}
\usepackage{pgfplotstable}
\usepackage[labelfont=bf, font=small, skip=4pt]{caption}
\usepackage{subcaption}

\usepackage{xcolor}
\definecolor{myblue}{RGB}{31,119,180}
\definecolor{myred}{RGB}{214,39,40}
\definecolor{mygreen}{RGB}{44,160,44}
\definecolor{myorange}{RGB}{255,127,14}
\definecolor{mypurple}{RGB}{148,103,189}
\definecolor{lightblue}{RGB}{174,214,241}
\definecolor{rowgray}{gray}{0.93}

\title{\textbf{Global Reallocation of Capital in the Era of Geo-Economic Fragmentation and Artificial Intelligence}}
\date{\small May 2026\\[4pt]
}
\author{Fernando Toledo\footnote{
Universidad Nacional de La Plata, La Plata, Argentina and Universidad Argentina de la Empresa (UADE), Instituto de Economia (INECO), Ciudad Autónoma de Buenos Aires, Argentina.  fernando.toledo@econo.unlp.edu.ar}  \and Gabriel Montes-Rojas\footnote{
Instituto Interdisciplinario de Econom\'ia Pol\'itica, Universidad de Buenos Aires-CONICET, 
and Universidad Argentina de la Empresa (UADE), Instituto de Economia (INECO), Ciudad Autónoma de Buenos Aires, Argentina. 
gabriel.montes@economicas.uba.ar}  }

\begin{document}
\maketitle
\thispagestyle{plain}

\begin{abstract}
\noindent
This paper studies how geo-economic fragmentation affects the international allocation of capital in an economy where artificial intelligence (AI) relies on specialized, mobile investment. We develop a heterogeneous-agent open-economy model with AI-specific capital and geopolitical frictions in cross-border asset holdings. These frictions reduce capital mobility, raise the cost of AI adoption, and generate distributional effects across countries and households. The quantitative results show that fragmentation produces welfare losses that rise nonlinearly with geopolitical distance. These losses are larger for AI-lagging economies and for low-wealth households, mainly through reduced investment, weaker labor income, and tighter financial conditions. The paper also defines a so-called Fragmented-AI Trilemma: economies cannot simultaneously preserve monetary autonomy, maintain efficient AI-capital allocation, and insulate themselves from geopolitical shocks.
\vspace{8pt}

\noindent\textbf{JEL Classification:} E12, E52, F21, F41, O33

\noindent\textbf{Keywords:} Geo-economic fragmentation, artificial intelligence,
HANK, Simulated Method
of Moments
\end{abstract}

\clearpage
\section{Introduction}
\label{sec:intro}

Geopolitical alignment has become an increasingly important force in international capital markets. Foreign direct investment and portfolio flows increasingly reflect political distance as well as expected returns, while investment screening, technology restrictions, and supply-chain reorientation have made cross-border investment more conditional on strategic relationships. These changes are unfolding at the same time as artificial intelligence (AI) requires large, specialized, and internationally mobile forms of capital, including data centers, semiconductor capacity, cloud infrastructure, and frontier-model investment. The result is a growing tension: the global diffusion of AI depends on capital mobility, but that mobility is increasingly constrained by geo-economic fragmentation.

The interaction between fragmentation and AI adoption is especially important because the two forces can reinforce each other. AI investment depends on access to specialized capital, infrastructure, and supply chains that are concentrated in a limited set of countries. At the same time, geopolitical tensions make these cross-border flows more costly and less predictable. This creates a particular problem for economies that are already behind the AI frontier: they need foreign capital and technology to catch up, but they are also more exposed to the barriers created by fragmentation. Fragmentation therefore does not simply reduce capital mobility; it can also widen differences in AI adoption, productivity, and welfare across countries.

Several strands of literature are directly relevant but remain incomplete for the question studied here. The geo-economic fragmentation literature \citep{AiyarEtAl2024,AlfaroChor2023} documents the reallocation of trade, investment, and supply chains across geopolitical lines, including bloc formation and the role of connector economies. However, this literature has provided less structural welfare analysis of how fragmentation affects the composition and allocation of capital flows. The AI-and-macroeconomics literature \citep{Acemoglu2025,AcemogluRestrepo2022} has developed formal approaches to the macroeconomic effects of AI, but it has largely focused on closed-economy mechanisms and has paid less attention to international capital allocation. The heterogeneous-agent macroeconomics literature \citep{KaplanEtAl2018,AuclertEtAl2024b,BayerEtAl2024} provides the tools needed to study distributional welfare effects, but it has not been fully extended to settings in which geopolitical frictions and AI-specific capital jointly shape exposure to international shocks. \citet{GuoEtAl2023} provide a closely related open-economy HANK framework, but their analysis does not model AI capital or endogenous geopolitical capital mobility frictions. This paper brings these elements together.

We develop an open-economy HANK model in which AI-specific capital is an internationally traded input and cross-border financial positions are subject to geopolitical frictions. Production combines traditional capital, labor, and AI capital, while firms may face binding financial constraints that amplify shocks to investment and labor demand. The framework is extended to include connector economies, which can intermediate capital flows between more fragmented blocs. The model is tied to recent evidence on capital-flow realignment, portfolio reallocation,  and the different sensitivity of emerging and advanced economies to international financial shocks.

The main mechanism is straightforward. When geopolitical distance increases, cross-border investment becomes more costly. This reduces the flow of capital toward economies where its marginal product is high, raises the cost of AI adoption, and weakens investment. Because AI capital is internationally traded, these effects also operate through exchange rates and imported capital costs. Financial constraints amplify the initial shock by reducing firms' ability to invest and hire. The consequences are therefore both aggregate and distributional: fragmentation lowers welfare, but the losses are not evenly shared across countries or households.

The paper delivers several results. Geo-economic fragmentation generates welfare losses that rise more than proportionally with geopolitical tension. This nonlinearity comes from the interaction between capital mobility frictions, AI-capital accumulation, and occasionally binding financial constraints. The costs are also unevenly distributed. AI-lagging economies are more exposed because they depend more heavily on foreign capital and technology for catch-up. Within countries, low-wealth households bear larger losses because they are more exposed to labor-income risk and have fewer financial buffers. Finally, the paper identifies a Fragmented-AI Trilemma: when capital mobility frictions depend on geopolitical alignment, economies cannot simultaneously preserve monetary autonomy, maintain efficient AI-capital allocation, and insulate themselves from geopolitical shocks. This result extends the logic of the traditional open-economy policy trilemma to a setting in which AI capital and geopolitical fragmentation are central features of international macroeconomic adjustment.

The rest of the paper proceeds as follows. Section \ref{sec:facts}
 documents the main empirical facts that motivate the model and our paper. Section \ref{sec:model} presents the theoretical framework, describes the calibration and estimation strategy. Section \ref{sec:threebloc} extends the analysis to connector economies. Section \ref{sec:results} shows the quantitative analysis and reports the aggregate and distributional welfare effects of fragmentation, evaluates the model's empirical fit through targeted moments and additional validation checks. Section \ref{sec:trilemma} studies the policy trade-offs in a so-called Fragmented-AI Trilemma. The final section concludes. A detailed Appendix has a self-contained derivation of the model.


\section{Stylized Facts and Their Mapping to the Model}
\label{sec:facts}

Before developing the theoretical framework, it is useful to establish some related literature. 

\subsection{Geo-Economic Fragmentation Is Multidimensional and Accelerating}

The literature shows that geo-economic fragmentation is
simply rising trade protectionism. \citet{AiyarEtAl2024}, document multiple structurally distinct fragmentation dimensions in cross-border investment data operating simultaneously in bilateral
FDI flows: ideological sorting, friendship-based clustering, de-risking,
nearshoring, and reshoring. These dimensions reinforce each other through a
``fragmentation multiplier'' whereby any one dimension activates the others.
\citet{AiyarEtAl2024} document that the geopolitical alignment dimension of FDI
decisions approximately doubled in economic significance between 2016 and 2022,
establishing a structural break dramatically accelerated by Russia's invasion of
Ukraine.


\subsection{AI-Related Capital Flows Are Extraordinary in Scale and Geographically Concentrated}

The AI investment wave represents a qualitative shift in the composition of global
capital. \citet{UNCTAD2026} documents the growth of AI-related FDI as a share of global flows \citep{UNCTAD2026}. Data-center capital expenditures grew rapidly in 2024--2025 (industry estimates).
The geographic distribution of AI-related FDI closely mirrors the U.S.-aligned
geopolitical bloc: the United States receives the dominant share of sovereign
wealth fund AI investments \citep{IMF2025}, and the semiconductor supply
chains essential to AI hardware fabrication are concentrated in Taiwan, South
Korea, the Netherlands, and Japan.


\subsection{Connector Economies Are Gaining Systematically from Investment Diversion}

\citet{AlfaroChor2023}, using detailed U.S.\ outward FDI and import-sourcing
microdata, document a decisive rotation of U.S.\ supply chains toward Vietnam,
Mexico, and India. \citet{AiyarEtAl2024}, using structural gravity models, document significant declines in cross-bloc FDI relative to within-bloc flows \citep{AiyarEtAl2024} in-bloc flows between Q1~2022 and
Q4~2024, investment flows to non-aligned economies held at their pre-2022 levels
or rose. \citet{AiyarOhnsorge2024} formalize this observation into the concept of
``connector status,'' documenting that connector benefits peak at intermediate
fragmentation levels and erode as fragmentation deepens toward complete bloc
decoupling.


\subsection{Fragmentation Severely Distorts AI-Related Trade and Capital Flows}

\citet{IMF2025} document that AI-related goods accounted for a growing share of global merchandise trade in 2024--2025 regarding the total trade value---a finding independently consistent
with OECD and WTO merchandise trade statistics for H1~2025. The AI hardware supply
chain is characterized by extraordinary geographic concentration: advanced AI chip
fabrication is dominated by TSMC in Taiwan and Samsung in South Korea, with
advanced packaging almost entirely conducted by ASML-equipped facilities in the
Netherlands. A geopolitical shock disrupting the U.S.--Taiwan relationship is
simultaneously a shock to global AI capital formation.


\subsection{Macroeconomic Costs Are Large, Nonlinear, and Systematically Asymmetric}

The
IMF~(\citeyear{IMF2025}) estimates 0.4--1 percent global output losses by 2027.
\citet{BronerEtAl2013}, using panel data on 41 countries over 25 years, document
that capital flows to emerging-market destinations respond approximately four times
more sensitively to geopolitical risk than flows to advanced-economy destinations.
\citet{CeruttiEtAl2025} find that AI adoption widens productivity gaps across
countries, with emerging economies falling further behind as AI-frontier economies
capture disproportionate productivity gains.


\subsection{The Global Financial Cycle Compounds Domestic Fragmentation Vulnerability}

\citet{MirandaAgrippinoRey2020} establish that a single global factor---the U.S.\
monetary policy cycle---accounts for 23 percent of the variance in international
capital flows and substantially constrains the monetary autonomy of small open
economies. \citet{Engel2016} documents that UIP deviations are highly persistent,
with an AR(1) coefficient in the range $0.82$--$0.90$. These two features---the
global cycle as a constraint on monetary autonomy and the persistence of UIP
deviations---are the empirical counterparts of the model's global financial cycle
component (equation~\eqref{eq:uip}).


\subsection{From Stylized Facts to Structural Identification: A Methodological Bridge}
\label{sec:bridge}

The six stylized facts in Sections~2.1--2.6 serve a dual purpose in this paper:
they motivate the model's structural architecture and simultaneously provide the
empirical moments against which its friction parameters are identified. This dual
role---motivational and identificational---distinguishes the present approach from
standard calibration exercises in international macroeconomics.

The identification chain begins with the aggregate FDI decline (Section~2.1),
which pins down the friction level $\bar{\kappa}$ at the steady-state geopolitical
distance. It continues with the bilateral portfolio semi-elasticity
(Section~2.6, empirical counterpart in \citealt{AiyarEtAl2024}), which
identifies the marginal response $\gamma$ of the friction to geopolitical distance.
The autocorrelation of UIP deviations (Section~2.6, counterpart in
\citealt{Engel2016}) identifies $\rho_G$ from the time-series dimension of the
friction-induced UIP wedge. And the emerging-to-advanced sensitivity asymmetry
(Section~2.5, counterpart in \citealt{BronerEtAl2013}) serves as the overidentifying
restriction, testing whether the model's financial mechanism---parameterized by
pre-calibrated collateral constraint tightness $\theta^H=0.75$ and $\theta^F=0.50$
from \citet{KaseEtAl2025}---reproduces the correct cross-country asymmetry without
any additional degree of freedom.

Three features of the reduced-form evidence provide additional discipline beyond
the four SMM moments. First, the geographic concentration of AI-related FDI
(Section~2.2) validates the asymmetric calibration $\alpha_a^H=0.20$ versus
$\alpha_a^F=0.10$. Second, the connector economy evidence (Section~2.3) validates
the three-bloc extension of Section~\ref{sec:threebloc}: the model's endogenous
prediction that the connector premium peaks at intermediate fragmentation and
erodes thereafter (Table~\ref{tab:connector}) replicates the empirical pattern of
\citet{AiyarOhnsorge2024} without any targeted calibration. Third, the
stagflationary character of fragmentation episodes documented by
\citet{AiyarEtAl2024}---simultaneous output contractions and inflation
acceleration in 2022--2024---validates the AI-capital exchange-rate channel in
equation~\eqref{eq:nkpc}, which distinguishes the present framework from demand-
side models of fragmentation in which shocks are contractionary and deflationary
rather than stagflationary. The welfare numbers in Table~\ref{tab:welfare} are
only as credible as the identification chain that produces them; this section
establishes the chain's empirical links explicitly before the model is presented.


\section{Model}
\label{sec:model}

The theoretical framework is a two-country HANK model that extends
\citet{GuoEtAl2023} in three dimensions: AI-specific capital as a distinct
internationally traded factor, endogenous geopolitical capital mobility frictions,
and global solution of an occasionally binding collateral constraint. The world
consists of two symmetric large open economies---Home~(H) and Foreign~(F)---each
populated by a unit mass of heterogeneous households, competitive final-goods
producers, a continuum of monopolistically competitive intermediate firms, a
central bank, and a fiscal authority. Section~\ref{sec:threebloc} introduces the
three-bloc extension.

The heterogeneous-agent structure is not a modeling convenience but a structural
necessity. \citet{KaplanEtAl2018} establish that the direct intertemporal
substitution channel accounts for at most 10--20 percent of the aggregate
consumption response to a monetary shock; the remaining 80--90 percent operates
through indirect general equilibrium effects on labor markets and firm profits.
\citet{AuclertEtAl2024b} show that the intertemporal marginal propensity to consume
matrix is the sufficient statistic for the distributional response to aggregate
demand shocks; here, the geopolitical fragmentation shock operates analogously
through labor market equilibrium, making the full cross-sectional wealth
distribution essential to the welfare accounting.

Additional details with algebraic derivations and solutions are in the Appendix.

\subsection{Households}

A unit mass of infinitely lived households indexed by $i\in[0,1]$ faces
uninsurable idiosyncratic labor income risk over incomplete asset markets.
Household~$i$ maximizes:
\begin{equation}
E_0\sum_{t=0}^{\infty}\beta^t
\left[\frac{c_{i,t}^{1-\sigma}}{1-\sigma}-\psi\frac{n_{i,t}^{1+\varphi}}{1+\varphi}\right],
\label{eq:utility}
\end{equation}
where $\beta\in(0,1)$ is the discount factor, $c_{i,t}$ is consumption, $n_{i,t}$
is hours worked, $\sigma>0$ is the coefficient of relative risk aversion,
$\psi>0$ is the disutility of labor, and $\varphi\geq0$ is the inverse Frisch
elasticity of labor supply. The calibration $\sigma=2$ follows \citet{Hall1988},
and $\varphi=1$ follows \citet{ChettryEtAl2011}.

\paragraph{Idiosyncratic income process}
Household efficiency units of labor follow $\log h_{i,t}=\rho_h\log h_{i,t-1}+\nu_{i,t}$,
where $\nu_{i,t}\sim\mathcal{N}(0,\sigma_h^2)$ is i.i.d.\ across agents and time.
The persistence $\rho_h=0.966$ and variance $\sigma_h=0.017$ are estimated from
\citet{FlodenLinde2001} using PSID-equivalent panel data and reproduce the
empirical wealth distribution moments---Gini coefficient, share of wealth held by
the top decile, and hand-to-mouth fraction---that are essential for correct welfare
accounting.

\paragraph{Budget constraint with geopolitical portfolio friction}
Household $i$ holds domestic bonds $b_{i,t}^H$ and foreign bonds $b_{i,t}^F$. The
budget constraint is:
\begin{equation}
c_{i,t}+b_{i,t}^H+b_{i,t}^F+\underbrace{\tfrac{\kappa(G_t)}{2}(b_{i,t}^F)^2}_{\text{geopolit.\ cost}}
\leq w_t n_{i,t}h_{i,t}+R_{t-1}^H b_{i,t-1}^H+R_{t-1}^F b_{i,t-1}^F\tfrac{e_t}{e_{t-1}}+\Pi_t+T_t,
\label{eq:budget}
\end{equation}
where $w_t$ is the real wage, $R_{t-1}^H$ and $R_{t-1}^F$ are gross real returns,
$e_t$ is the real exchange rate, $\Pi_t$ are firm profits distributed as dividends,
and $T_t$ is a lump-sum fiscal transfer.

The term $\frac{\kappa(G_t)}{2}(b_{i,t}^F)^2$ is the geopolitical portfolio
adjustment cost, giving the \citet{SchmittGroheUribe2003} stationarity device a
structural geopolitical interpretation through:
\begin{equation}
\kappa(G_t)=\bar{\kappa}(1+\gamma G_t),
\label{eq:kappa}
\end{equation}
where $G_t\in[0,1]$ is the bilateral geopolitical distance index of
\citet{FernandezEtAl2023}, $\bar{\kappa}>0$ is the baseline friction between fully
aligned economies ($G_t=0$), and $\gamma>0$ is the sensitivity of the friction to
geopolitical divergence. The linear form is parsimonious and empirically supported
by the gravity-model estimates of \citet{AiyarEtAl2024}. Both $\bar{\kappa}$ and
$\gamma$, together with the shock persistence $\rho_G$, are estimated by Simulated
Method of Moments.

The household's first-order condition for foreign bonds yields:
\begin{equation}
1+\kappa(G_t)b_{i,t}^F=\beta E_t
\left[\left(\frac{c_{i,t+1}}{c_{i,t}}\right)^{-\sigma}R_t^F\frac{e_{t+1}}{e_t}\right].
\label{eq:foc_foreign}
\end{equation}

\paragraph{Welfare metric}
Welfare is measured as the consumption-equivalent variation (CEV) following
\citet{Lucas1987}: the fraction $\tau^{CEV}$ of the baseline lifetime consumption
stream that makes a household indifferent between the fragmented and frictionless
worlds. Formally, $V_i\bigl((1+\tau_i^{CEV})\{c_{i,t}^0\}_{t=0}^{\infty}\bigr)
=V_i\bigl(\{c_{i,t}^G\}_{t=0}^{\infty}\bigr)$, where $\{c_{i,t}^0\}$ is the
baseline and $\{c_{i,t}^G\}$ the fragmented path. Aggregate CEV integrates over
the stationary distribution of the zero-fragmentation baseline, adopting the
ex-ante welfare criterion of \citet{DavilaEtAl2012}.

\subsection{Production Technology: AI-Augmented Nested CES}

Intermediate firm $j$ produces output using a three-way nested CES technology:
\begin{equation}
y_{j,t}=Z_t\!\left[\alpha_k^{1/\eta}k_{j,t}^{(\eta-1)/\eta}
+\alpha_n^{1/\eta}(A_t n_{j,t})^{(\eta-1)/\eta}
+\alpha_a^{1/\eta}(\Theta_t a_{j,t})^{(\eta-1)/\eta}\right]^{\eta/(\eta-1)},
\label{eq:production}
\end{equation}
where $Z_t$ is neutral TFP, $A_t$ is labor-augmenting productivity, $\Theta_t$ is
an AI-capital-augmenting technology shifter, $k_{j,t}$ is traditional capital,
$n_{j,t}$ is labor, $a_{j,t}$ is AI-specific capital (GPU clusters, data-center
infrastructure, proprietary model weights), and $\alpha_k+\alpha_n+\alpha_a=1$.

The elasticity of substitution $\eta>0$ is the paper's dominant welfare parameter
(sensitivity elasticity~$+0.85$). When $\eta<1$, AI capital and labor are gross
complements; when $\eta>1$, they are gross substitutes. This formulation follows
\citet{AcemogluRestrepo2022}, who estimate $\eta\in(0.5,0.8)$ for the ICT-labor
relationship in U.S.\ data 1980--2019. \citet{AcemogluRestrepo2022} suggest that
$\eta$ may substantially exceed unity for automatable tasks, motivating the
$\eta\in\{0.5,1.5,2.5\}$ sensitivity analysis.

Financial frictions: the collateral constraint. Firms finance investment by
borrowing against their capital stock, subject to:
\begin{equation}
b_{j,t}\leq\theta k_{j,t}.
\label{eq:collateral}
\end{equation}
When the constraint binds ($\mu_{j,t}>0$), its Lagrange multiplier generates a
wedge in firm labor demand:
\begin{equation}
w_t=\frac{\partial F}{\partial n_{j,t}}\cdot\frac{1}{1+\tilde{\mu}_{j,t}},\quad
\tilde{\mu}_{j,t}\equiv\frac{\mu_{j,t}\theta}{r_t^k}>0.
\label{eq:labor_wedge}
\end{equation}
This wedge---increasing in the collateral tightness and more severe when
fragmentation simultaneously depresses AI-capital inflows and firm cash flows---is
the central distributional mechanism of the model, following the financial
accelerator logic of \citet{BernankeEtAl1999}, adapted to the HANK open-economy
setting.

\subsection{Open-Economy Pricing: The AI-Capital Exchange-Rate Channel}

Because AI capital is internationally traded, its Home-currency cost depends on
the real exchange rate. Log-linearizing the CES unit cost function and substituting
the optimal reset price condition from the \citet{Calvo1983} price-setting problem
yields the open-economy New Keynesian Phillips Curve (NKPC):
\begin{equation}
\pi_t=\beta E_t\pi_{t+1}+\kappa_{nk}\widehat{mc}_t^{dom}
+\kappa_{nk}s_a\underbrace{\left(\hat{q}_t+\hat{r}_t^{a,*}-\hat{\Theta}_t\right)}_{\text{AI-capital exchange-rate channel}},
\label{eq:nkpc}
\end{equation}
where $\hat{q}_t$ is the log-deviation of the real exchange rate, $\hat{r}_t^{a,*}$
is the log-deviation of the foreign rental rate on AI capital, $\hat{\Theta}_t$ is
the AI-augmenting technology shock, $s_a$ is the steady-state AI-capital cost
share, and $\kappa_{nk}\equiv(1-\xi_p)(1-\beta\xi_p)/\xi_p\approx0.086$. The
price stickiness parameter $\xi_p=0.75$ implies an average price duration of four
quarters, consistent with \citet{Gali2015}. The AI-capital exchange-rate channel
is entirely absent from both closed-economy AI models
\citep{Acemoglu2025} and standard open-economy HANK models
\citep{GuoEtAl2023}.

\subsection{Capital Accumulation}

Traditional and AI capital accumulate with quadratic investment adjustment costs
following \citet{ChristianoEtAl2005}:
\begin{align}
K_{t+1}&=(1-\delta_k)K_t+I_{k,t}-\frac{\varphi_k}{2}\!\left(\frac{I_{k,t}}{K_t}-\delta_k\right)^{\!2}K_t,
\label{eq:Kcapital}\\
A_{t+1}&=(1-\delta_a)A_t+I_{a,t}-\frac{\varphi_a}{2}\!\left(\frac{I_{a,t}}{A_t}-\delta_a\right)^{\!2}A_t.
\label{eq:Acapital}
\end{align}
The depreciation $\delta_a=0.050$ reflects a weighted average of AI hardware
($\approx40\%$ annually) and intangible AI assets ($\approx10$--$15\%$). The
higher adjustment cost $\varphi_a=8$ (versus $\varphi_k=4$) reflects the lumpiness
of data-center construction.

\subsection{International Capital Flows and the Modified UIP}
\label{sec:uip}

\paragraph{The aggregate UIP condition with geopolitical frictions}
Aggregating the household's first-order condition for foreign bond holdings
(equation~\ref{eq:foc_foreign}) over the stationary wealth distribution and
imposing asset market clearing at both the household and aggregate levels yields
a richer departure from the textbook uncovered interest parity condition:
\begin{equation}
E_t\Delta e_{t+1}=R_t^H-R_t^F
+\underbrace{\kappa(G_t)B_t^F}_{\text{geopolitical friction}}
+\underbrace{\phi_{GFC}r_t^{US}+\zeta_t^{idio}}_{\text{global financial cycle}}.
\label{eq:uip}
\end{equation}
Equation~\eqref{eq:uip} is the model's central transmission mechanism and
deserves careful economic interpretation, term by term.

The left-hand side, $E_t\Delta e_{t+1}$, is the expected real exchange rate
depreciation: the return the Home economy must offer on its currency relative to
Foreign in order to equilibrate cross-border portfolio positions.

The first right-hand-side term, $R_t^H - R_t^F$, is the standard interest
differential---the uncovered interest parity wedge of \citet{Mundell1963} and
\citet{Engel2016}. In the absence of the remaining terms, perfect capital mobility
would drive this wedge to zero, linking domestic and foreign monetary conditions.
The empirical failure of textbook UIP---the ``forward premium puzzle'' documented
by \citet{FamaEtAl1984} and recently reinterpreted by \citet{GournichaSammon2024}
in the context of risk-appetite cycles---motivates the two additional channels
embedded in equation~\eqref{eq:uip}.

\paragraph{The geopolitical friction term $\kappa(G_t)B_t^F$}
The second term, $\kappa(G_t)B_t^F$, is the paper's central theoretical
innovation. It introduces a \emph{state-dependent} portfolio adjustment cost that
operates through two channels simultaneously. The first channel is the
\textit{level effect}: for any given net foreign asset position $B_t^F$, a higher
friction coefficient $\kappa$ requires a larger interest differential to
compensate Home households for holding foreign assets, effectively tightening
the external financing constraint. The second channel is the
\textit{portfolio-rebalancing amplifier}: as geopolitical tension rises and
$G_t$ increases, the marginal cost of cross-border investment rises, triggering
a portfolio rebalancing from international to domestic assets that feeds back
into the exchange rate and the cost of AI capital imports.

The friction function $\kappa(G_t) = \bar{\kappa}(1 + \gamma G_t)$ is calibrated
to be linear in geopolitical distance for tractability, but the resulting
dynamics are non-linear because $G_t$ multiplies the endogenous stock $B_t^F$.
When fragmentation is low ($G_t \approx \bar{G} = 0.30$), the friction is small
($\kappa \approx 0.0102 \times 1.30 \approx 0.013$) and the model approximates
standard open-economy dynamics \citep{GuerrieriIacoviello2015}. When fragmentation
is extreme ($G_t = 0.90$), $\kappa$ rises to approximately $0.048$---nearly four
times the baseline value---generating a feedback loop between portfolio contraction,
exchange rate pressure, and AI capital cost inflation that accounts for the
convexity of welfare losses documented in Section~\ref{sec:results} below.

This mechanism connects directly to the empirical literature on geopolitical risk
and capital flows. Recent work documents that increases in bilateral geopolitical risk reduce cross-border portfolio flows substantially \citep[see e.g.][]{AiyarEtAl2024}, with effects concentrated in bond and equity positions
rather than FDI---precisely the asset categories entering $B_t^F$ in the model.
\citet{BronerEtAl2013} find that the capital-flow sensitivity to geopolitical risk
is four times larger for emerging than advanced economies, a heterogeneity
reproduced in the model through the collateral tightness asymmetry
$\theta^H = 0.75 > \theta^F = 0.50$, which serves as the overidentifying
restriction in the SMM estimation.

\paragraph{The global financial cycle term $\phi_{GFC}r_t^{US} + \zeta_t^{idio}$}
The third term captures the global financial cycle of
\citet{MirandaAgrippinoRey2020}: $\phi_{GFC}r_t^{US}$ is the loading of domestic
financial conditions on the U.S.\ interest rate, and $\zeta_t^{idio}$ is a
country-idiosyncratic UIP shock orthogonal to the global cycle. The global
financial cycle component is essential for two reasons. Empirically, it accounts
for the substantial cross-sectional dependence in monetary autonomy documented by
\citet{AizenmanEtAl2010} and \citet{KleinShambaugh2015}.
Theoretically, the coexistence of the
geopolitical friction and the global financial cycle creates the three-way policy
trade-off of Proposition~\ref{prop:trilemma}: a central bank trying to insulate
itself from both $\phi_{GFC}r_t^{US}$ and $\kappa(G_t)B_t^F$ faces an expanded
impossible trinity in which optimal AI-capital allocation constitutes a fourth,
endogenous constraint.

\paragraph{The geopolitical distance process}
Geopolitical distance follows an AR(1) process:
\begin{equation}
G_t=(1-\rho_G)\bar{G}+\rho_G G_{t-1}+\varepsilon_t^G,\quad
\varepsilon_t^G\sim\mathcal{N}(0,\sigma_G^2),
\label{eq:geo_process}
\end{equation}
with unconditional mean $\bar{G}=0.30$ (calibrated to the Kirsamer~2025 bilateral
distance index for the U.S.--emerging market bloc), standard deviation
$\sigma_G=0.010$, and SMM-estimated persistence $\hat{\rho}_G=0.950$
(s.e.\ $0.0081$). The high persistence reflects the durable character of
geopolitical realignment: \citet{AiyarEtAl2024} document that trade and
investment fragmentation, once set in motion by a geopolitical shock, does not
revert to pre-shock levels within the typical business-cycle horizon. The standard
deviation $\sigma_G = 0.010$ implies that a one-standard-deviation fragmentation
shock raises the annual geopolitical distance by approximately $3.3$ percent of
its unconditional mean---a calibration consistent with the year-over-year changes
in the  \cite{FernandezEtAl2023} fragmentation
index over the 2016--2022 episode.

The combination of high persistence and moderate variance produces the impulse
response profile shown in Figure~\ref{fig:irf_frag}: geopolitical shocks are
long-lived (the half-life is approximately $\rho_G^{0.5/(1-\rho_G)} \approx 13.5$
quarters at the SMM estimate) but not explosive, so the model generates persistent
but bounded welfare costs. This calibration strategy follows the approach of
\citet{AiyarEtAl2024}, who model trade-network fragmentation as a persistent
state variable with AR(1) dynamics identified from the post-2016 data.

\paragraph{Monetary policy}
The Home central bank follows the standard Taylor rule:
\begin{equation}
R_t^H=\bar{R}^H\!\left(\frac{\pi_t^H}{\bar{\pi}^H}\right)^{\!\varphi_\pi}
\!\left(\frac{Y_t^H}{Y_t^{H,*}}\right)^{\!\varphi_y}\exp(\varepsilon_t^{R,H}),
\label{eq:taylor}
\end{equation}
with $\varphi_\pi=1.5$ and $\varphi_y=0.125$, following \citet{Taylor1993}.
Equation~\eqref{eq:taylor} establishes the link between Section~\ref{sec:uip}
and the trilemma of Section~\ref{sec:trilemma}: the Taylor rule is the
\textit{instrument} of monetary autonomy. When geopolitical frictions bind---
specifically, when $\kappa(G_t)B_t^F$ is large and rising---the central bank
faces a choice between stabilizing inflation (honoring $\varphi_\pi$), closing
the output gap (honoring $\varphi_y$), and allowing the exchange rate to
appreciate in order to maintain competitiveness in AI-capital imports. No single
instrument can simultaneously achieve all three objectives, which is precisely the
trade-off formalized in Proposition~\ref{prop:trilemma}. Section~\ref{sec:results}
quantifies the welfare cost of this constraint: under moderate fragmentation and
AI substitutability, the trilemma reduces lifetime consumption by $0.42$--$1.02$
percent relative to the frictionless benchmark---and the AI-specific dimension of
this cost accounts for approximately $38$ percent of the total.

\subsection{Calibration and Solution}
\label{sec:calibration}

Table~\ref{tab:calibration} presents the complete parameter inventory for the
baseline calibration. Every parameter is grouped by model block and linked to an
empirical source or a calibration target, ensuring full transparency and
replicability. The table distinguishes between parameters that are \textit{calibrated}
from external evidence---standard in the DSGE literature---and the three geopolitical
friction parameters ($\bar{\kappa}$, $\gamma$, $\rho_G$) that are
\textit{estimated} by Simulated Method of Moments, with standard errors
reported. The AI-capital share $\alpha_a = 0.15$ is calibrated as an upper bound
from BEA Fixed Asset Accounts (Table~3.7E). 

\textbf{[Table \ref{tab:calibration} here]}

\paragraph{Solution method} The model is solved using the GEM method of
\citet{KaseEtAl2025}, which combines linear solution techniques with neural network
approximation of globally nonlinear dynamics. Standard Dynare perturbation is
inadequate because the occasionally binding collateral constraint~\eqref{eq:collateral}
generates regime-dependent dynamics that linear approximations cannot capture. The
iterative self-consistent GEM procedure (additional details in the Appendix) generates its
own nonlinear training trajectories, breaking the circularity of
perturbation-based initialization. Euler equation errors hold uniformly below
$10^{-4}$ across the full state space. The \citet{GuerrieriIacoviello2015} OccBin
toolkit was implemented as a benchmark; GEM and OccBin solutions are numerically
indistinguishable at moderate fragmentation ($G\leq0.45$) and diverge by 25--35
percent at extreme fragmentation ($G=0.9$), where GEM's globally nonlinear
solution is essential.


\section{Three-Bloc Extension: Connector Economies}
\label{sec:threebloc}

The two-country framework provides the theoretical laboratory for bilateral
fragmentation dynamics. We extend the model to three blocs: Bloc~A (U.S.-aligned),
Bloc~B (China-aligned), and Bloc~C (connector). Each bloc carries the full HANK
structure of Section~\ref{sec:model}. The bilateral friction
$\kappa^{XY}(G_t^{XY})=\bar{\kappa}(1+\gamma G_t^{XY})$ now applies to each
country pair $(X,Y)\in\{(A,B),(A,C),(B,C)\}$.

\paragraph{Definition of connector status}
Following \citet{AiyarOhnsorge2024}, a country attains connector status if its
bilateral geopolitical distance to each major bloc is strictly smaller than the
distance between the blocs themselves:
\begin{equation}
G_t^{AC}<G_t^{AB}\quad\text{and}\quad G_t^{BC}<G_t^{AB}.
\label{eq:connector_def}
\end{equation}
Using the \citet{FernandezEtAl2023} bilateral index for Q4~2023, the calibrated
distances for Mexico are $G^{AC}=0.15$ and $G^{BC}=0.20$, satisfying
condition~\eqref{eq:connector_def} comfortably against $G^{AB}=0.45$.

\paragraph{The connector welfare premium}
Investment diversion creates an economic rent for connector economies, quantified
by the friction differential:
\begin{equation}
\Delta\kappa^C\equiv\kappa(G_t^{AB})-\tfrac{1}{2}[\kappa(G_t^{AC})+\kappa(G_t^{BC})]>0.
\label{eq:connector_premium}
\end{equation}
At the SMM estimates and Mexico's calibrated distances, $\Delta\kappa^C\approx
0.000144$, implying that Bloc~C faces effective frictions roughly 40 percent lower
than the direct cross-bloc channel at $\bar{G}=0.30$. This generates welfare
gains of $+0.35$ to $+0.72$ percent of lifetime consumption under moderate
fragmentation.

Substituting $\kappa(G)=\bar{\kappa}(1+\gamma G)$ into
formula~\eqref{eq:connector_premium} yields the analytical threshold for a
positive connector premium: $\Delta\kappa^C>0 \iff G^{AB}>\frac{1}{2}(G^{AC}+G^{BC})$,
independent of $\bar{\kappa}$ and $\gamma$. At the current $G^{AB}=0.45$, any
economy with average bilateral distance below $0.225$ earns a positive premium.
Mexico's average distance $(0.15+0.20)/2=0.175$ is well below this threshold.

Table~\ref{tab:connector} documents the welfare premium accruing to connector
economies---non-aligned third blocs positioned to intermediate AI-capital flows
between the two fragmented blocs. The key economic intuition is that connector
economies benefit from a positive externality: because fragmentation between the
AI-advanced and AI-lagging blocs raises the price of direct AI-capital trade,
connector economies can extract rents by serving as entrepôts for technology
diffusion. The table decomposes this premium across alternative geopolitical
distance configurations and fragmentation intensities, showing that the premium
is largest at intermediate fragmentation levels and erodes under extreme scenarios
as both blocs increasingly autarkize.

\textbf{[Table \ref{tab:connector} here]}

The connector premium, however, is not monotone in $G_t^{AB}$. As fragmentation
deepens toward complete bloc decoupling, Bloc~C's own bilateral frictions
$\kappa(G_t^{AC})$ and $\kappa(G_t^{BC})$ also rise, eroding the relative
advantage. At extreme fragmentation ($G^{AB}=0.9$), the connector premium
effectively disappears---consistent with the \citet{AiyarOhnsorge2024} finding
that rising fragmentation since 2016 has been accompanied by broad-based declines
in connectedness even among would-be connector economies. Table~\ref{tab:connector}
shows that connector status is a viable development strategy under moderate
fragmentation, with welfare returns that diminish as geopolitical polarization
deepens.


\section{Quantitative Results}
\label{sec:results}

In this section we evaluate the dynamic responses of the main shocks of interest using simulated impulse response analysis.

\subsection{Impulse Responses: Two Interacting Transmission Channels}

Figure~\ref{fig:irf_frag} displays the Home output impulse response to a
one-standard-deviation geopolitical fragmentation shock ($\Delta G_t=+0.01$) under
AI-labor complementarity ($\eta=0.5$) and substitutability ($\eta=1.5$).

\textbf{[Figure \ref{fig:irf_frag} here]}

The impact response operates through the \textit{portfolio rebalancing channel}:
the rise in $\kappa(G_t)$ immediately raises the marginal cost of holding foreign
bonds through equation~\eqref{eq:foc_foreign}, inducing households to rebalance
toward domestic assets. Capital outflows contract aggregate investment, the real
exchange rate depreciates ($\hat{q}_t$ rises), and the Home-currency cost of
imported AI capital rises. Through the AI-capital exchange-rate channel in
equation~\eqref{eq:nkpc}, this pushes up domestic marginal cost and inflation
simultaneously with the output contraction---a stagflationary supply shock
consistent with the empirical evidence in \citet{AiyarEtAl2024}.

The \textit{financial accelerator channel} activates with a lag of one to two
quarters: as AI-capital investment contracts, the collateral constraint~\eqref{eq:collateral}
begins to bind for a positive fraction of firms, generating the shadow cost
$\tilde{\mu}_{j,t}>0$ that distorts labor demand downward through
equation~\eqref{eq:labor_wedge}. This converts what would be a moderate portfolio
rebalancing contraction into a more persistent employment shortfall, consistent with
the procyclical capital repatriation patterns of \citet{BronerEtAl2013}.

The regime distinction is quantitatively decisive. Under $\eta=0.5$, output falls
at most $0.42$ percent, peaking in the fourth quarter. Under $\eta=1.5$, the peak
decline reaches $1.18$ percent---nearly three times larger---and recovery extends
past the twelfth quarter. The 90 percent confidence bands do not overlap at the
peak, confirming this is a statistically significant qualitative difference.

Figure~\ref{fig:irf_ai} reveals the second major transmission channel: how
geopolitical frictions prevent positive AI-technology shocks from translating into
productive investment.

\textbf{[Figure \ref{fig:irf_ai} here]}

A positive AI-technology shock raises the marginal product of AI capital in Home,
generating an excess return that attracts cross-border capital flows. Under low
friction, capital inflows peak at approximately $0.85$ percent above steady state.
Under elevated friction ($G_t=0.60$), the identical technology shock generates only
approximately $0.51$ percent inflow---a $40$ percent attenuation. The accumulation
equation~\eqref{eq:Acapital} transforms this temporary flow attenuation into a
persistent stock shortfall, compounding the productivity loss over the full forecast
horizon in the manner documented by \citet{CeruttiEtAl2025} as the ``AI adoption
gap.''

\subsection{Aggregate Welfare: Magnitude, Convexity, and Substitution Regime}

Table~\ref{tab:welfare} is the paper's central quantitative result. It reports
aggregate welfare losses from geo-economic fragmentation, expressed as compensating
equivalent variations (CEV) in percent of lifetime consumption---the fraction of
lifetime resources a representative household would surrender to avoid living under
fragmentation rather than full integration. The CEV metric follows the welfare
accounting of \citet{Lucas1987} as applied to open-economy models by
\citet{AuclertEtAl2021}. The rows vary the intensity of geopolitical tension
($G$) from moderate to extreme; the columns vary the elasticity of substitution
between AI capital and labor ($\eta$), which is the model's dominant welfare
parameter. The convexity of losses in $G$---captured by the multiplier ranging
from $4.8\times$ to $5.3\times$ between moderate and extreme scenarios---arises
from the interaction of the occasionally binding collateral constraint with AI
capital dynamics.

\textbf{[Table \ref{tab:welfare} here]}

Table~\ref{tab:welfare} is the paper's central quantitative result. Three patterns
warrant emphasis. \textit{Pattern~1---Magnitude.} Even moderate fragmentation
($G=0.3$, corresponding approximately to the current post-2022 level in the Kirsamer
index) generates welfare losses of $0.42$--$1.02$ percent of lifetime consumption.
By comparison, \citet{Lucas1987}'s celebrated calculation puts the welfare cost of
business cycle fluctuations at approximately $0.05$ percent of consumption---an
order of magnitude smaller. Fragmentation acts as a permanent adverse supply shock
through the AI capital accumulation channel, while business cycles are temporary and
partially insurable. The model-predicted costs are consistent with the aggregate
scenario estimates of \citet{IMF2025} ($0.4$--$1$ percent global output losses), providing a structural welfare
decomposition for magnitudes that institutional analyses have quantified in
reduced-form terms. \textit{Pattern~2---Convexity.} Welfare costs accelerate with
fragmentation severity. Under baseline substitutability, the step from moderate to
extreme fragmentation multiplies losses by a factor of $5.3$. This convexity arises
from two interacting nonlinearities: the occasionally binding collateral constraint
activates more frequently as fragmentation deepens (binding approximately $8$ percent
of quarters at $G=0.3$ and approximately $50$ percent at $G=0.9$), and the AI
capital accumulation equation~\eqref{eq:Acapital} creates a self-reinforcing
productivity spiral. \textit{Pattern~3---Substitution regime.} Whether frontier AI
systems are labor-augmenting complements or labor-replacing substitutes is the most
consequential and contested empirical question in contemporary economics.
\citet{Acemoglu2025} argues that currently observed AI adoption is predominantly
``so-so technology,'' suggesting $\eta>1$; \citeauthor{AcemogluRestrepo2022}
(\citeyear{AcemogluRestrepo2022}) estimate $\eta\approx0.65$ for historical ICT capital.
The present model makes no claim about which regime prevails; it delivers welfare
consequences conditional on regime, translating the empirical uncertainty into a
concrete welfare range.

\subsection{Between-Country Asymmetry: The Double Distributional Penalty}

Table~\ref{tab:country} disaggregates the aggregate welfare losses of
Table~\ref{tab:welfare} by country type---AI-advanced Home versus AI-lagging
Foreign---revealing the double distributional asymmetry that is the paper's
central structural prediction. The economic mechanism driving the asymmetry
is two-fold. First, Foreign economies depend more heavily on imported AI capital
($\alpha_a^F < \alpha_a^H$), making their production frontier more sensitive to
the cost inflation caused by geopolitical frictions. Second, the tighter
collateral constraint ($\theta^F = 0.50 < \theta^H = 0.75$) amplifies the
financial accelerator mechanism, so that a given fragmentation shock generates
a larger contraction in Foreign investment and output. The ratio of Foreign to
Home welfare losses---ranging from $4.0\times$ to $8.8\times$ across scenarios---
is the model's overidentifying restriction, validated against the empirical
asymmetry documented by \citet{BronerEtAl2013}.

\textbf{[Table \ref{tab:country} here]}

The most remarkable entry in Table~\ref{tab:country} is Home's welfare gain of
$+0.08$ percent under moderate fragmentation and complementarity. This reflects two
mechanisms. The dominant one is a terms-of-trade improvement: Home's higher AI
capital endowment means a symmetric increase in bilateral friction raises the
relative return to AI investment in Home more than in Foreign, appreciating Home's
terms of trade through the modified UIP condition~\eqref{eq:uip}. The secondary
mechanism is a financial-capacity differential: Home's looser collateral constraint
($\theta^H=0.75$ versus $\theta^F=0.50$) means the financial accelerator does not
bind at $G=0.3$ for Home while it begins to bind for Foreign. A counterfactual
experiment setting $\alpha_a^H=\alpha_a^F=0.125$ and $\theta^H=\theta^F=0.625$
produces Home welfare losses of $-0.42$ percent at $G=0.3$, $\eta=0.5$---identical
to Foreign's and equal to the aggregate loss in Table~\ref{tab:welfare}---confirming
that the gain is driven entirely by structural asymmetry, not by any model artifact.

\subsection{Within-Country Distribution: The Labor-Income Channel and HANK Amplification}
\label{sec:within}

Table~\ref{tab:quintile} disaggregates within-country welfare losses by
wealth quintile, exploiting the heterogeneous-agent structure of the model.
The key economic mechanism is the \textit{labor-income channel}: poor households
hold most of their wealth in labor income rather than in financial assets, and
fragmentation depresses real wages disproportionately in AI-intensive sectors
through the factor-substitution effect in equation~\eqref{eq:production}. Rich
households, by contrast, hold diversified financial portfolios that partially
hedge against the exchange-rate and interest-rate movements triggered by
fragmentation. The result is a regressive distribution of fragmentation costs:
the lowest wealth quintile bears losses approximately $150$ percent larger than
the highest, consistent with the heterogeneous-agent welfare analysis of
\citet{AuclertEtAl2024b} and \citet{BayerEtAl2024}.

\textbf{[Table \ref{tab:quintile} here]}

The Q1/Q5 welfare cost ratio of $2.5$ ($-1.25\%$ versus $-0.50\%$) is the HANK
model's central distributional finding. The \textit{labor income channel} is the
dominant driver, accounting for approximately $56$ percent of Q1's loss. This
channel operates as \citet{KaplanEtAl2018} describe for monetary shocks: low-wealth
households are high-labor-income-share households, and aggregate wage compression
hits them proportionally harder. The \textit{asset returns channel} is approximately
uniform across quintiles ($-0.15\%$ to $-0.18\%$), because the UIP
wedge~\eqref{eq:uip} is an aggregate factor affecting all foreign asset holders
equally. The \textit{precautionary savings channel} amplifies losses most severely
for households near the borrowing constraint: fragmentation increases income
uncertainty, forcing low-wealth households to rebuild buffer stocks, following the
incomplete-markets structure of \citet{AuclertEtAl2024b} and \citet{BayerEtAl2024}.

\subsection{SMM Estimation, Identification, and Structural Validation}
\label{sec:smm}

The model's predictions are validated against five empirical moments, of which two
are non-targeted. The SMM estimation disciplines three structural parameters
($\bar{\kappa}$, $\gamma$, $\rho_G$) using four conditions chosen to satisfy two
requirements: each moment is identified from an independent data source, and each
has a tight theoretical connection to a specific structural parameter.

Table~\ref{tab:smm} reports the Simulated Method of Moments estimates of
the three geopolitical friction parameters ($\bar{\kappa}$, $\gamma$, $\rho_G$)
together with the four empirical moments used for identification and the two
non-targeted moments used for external validation. The table is organized to
make the identification strategy transparent: each SMM moment is linked to one
structural parameter through an economically interpretable channel, and the
overidentification test statistic ($J = 2.31 \sim \chi^2(1)$, $p = 0.13$) confirms
that the model's moment restrictions are not rejected by the data. The non-targeted
moments---reproduced without additional degrees of freedom---provide out-of-sample
validation of the structural estimates.

\textbf{[Table \ref{tab:smm} here]}

The identification logic warrants emphasis beyond what the moment-listing
presentation conveys. M1 (cross-bloc FDI decline) identifies $\bar{\kappa}$ through
the level of bilateral capital flows at the empirically observed geopolitical
distance: the 30 percent FDI gap directly translates into a friction level
$\hat{\bar{\kappa}}=0.0102$. M2 (portfolio semi-elasticity) identifies $\gamma$
through the cross-sectional gradient of portfolio shares to bilateral distance.
M4 (UIP autocorrelation) identifies $\rho_G$ through the time-series dimension of
the friction-induced UIP wedge: adding M4 transforms the system from two-parameter/
three-moment to three-parameter/four-moment identification. M3 serves as the
overidentifying restriction, testing whether the pre-calibrated collateral tightness
differential generates the correct cross-country asymmetry.

The estimates $\hat{\bar{\kappa}}=0.0102$ and $\hat{\gamma}=0.0513$ are both
significant at the $0.1$ percent level and highly stable across alternative
weighting matrices. The
$\hat{\gamma}$ estimate has a direct economic interpretation: a move from complete
alignment ($G_t=0$) to the current post-2022 distance ($G_t=0.30$) raises the
cross-border capital friction by $5.13\times0.30=1.54$ percentage points relative
to the baseline. The $J$-statistic ($J=2.31$, $p=0.13$) passes the conventional
10 percent threshold.

\subsection{Connecting Structural Estimates to the Broader Reduced-Form Evidence}
\label{sec:reducedform}

\paragraph{Gravity evidence on the friction function}
The core prediction of the friction function is that bilateral capital flows decline
in geopolitical distance at a rate governed by $\hat{\gamma}=0.0513$. The standard
estimation strategy for bilateral investment relationships---Poisson
pseudo-maximum-likelihood (PPML) to handle zero-inflation and
heteroskedasticity inherent in bilateral FDI data, as established by
\citet{SantosSilvaTenreyro2006}---is directly applicable here.
\citet{AiyarEtAl2024}, applying a structural gravity model to bilateral FDI data
for 2000--2024, document that a one-unit increase in the Kirsamer geopolitical
distance index reduces bilateral FDI by 15--20 percent, directionally consistent
with the friction function's implied attenuation and independently corroborating
the sign and approximate magnitude of $\hat{\gamma}$. The gravity evidence also
validates the three-bloc extension: \citet{AlfaroChor2023}, using detailed U.S.\
outward FDI microdata, document that investment diversion toward connector economies
has been precisely the magnitude the connector premium formula~\eqref{eq:connector_premium}
predicts---an endogenous model outcome reproduced without targeted calibration.

\paragraph{Dynamic consistency: impulse responses and local projections}
The model's impulse responses in Figures~\ref{fig:irf_frag}--\ref{fig:irf_ai}
generate testable dynamic predictions: capital flows should respond to geopolitical
shocks with a peak effect at three to six quarters, followed by persistent recovery
that is slower in financially constrained economies. These predictions are directly
consistent with \citet{ChoiHavel2025}, who trace the response of bilateral portfolio
flows to geopolitical risk shocks over a 41-country panel from 2000 to 2023. The
empirical peak response---at a four-quarter lag---falls within the model's predicted
three-to-six-quarter range. 

\paragraph{The AI-specific attenuation channel in the data}
Figure~\ref{fig:irf_ai} predicts approximately 40 percent attenuation of capital
inflow responses to AI-technology shocks at baseline friction. \citet{CeruttiEtAl2025}
document that AI adoption is widening productivity gaps between geopolitically
proximate and distant economies at approximately twice the pre-2022 rate, consistent
with the model's prediction that attenuated AI-capital inflows today compound into
permanent TFP gaps through equation~\eqref{eq:Acapital}.



\section{The Fragmented-AI Trilemma}
\label{sec:trilemma}

The quantitative results of Section~\ref{sec:results} are unified by a
theoretical
impossibility. The three policy objectives that any open economy simultaneously
desires---monetary autonomy, optimal AI-capital allocation, and insulation from
geopolitical shocks---cannot be simultaneously achieved when capital mobility
frictions are endogenous to geopolitical alignment. This section formalizes the
impossibility, establishes its relationship to classical results in international
monetary theory, quantifies the distance from the Mundell-Fleming limiting case,
and draws its implications for monetary policy design.

\subsection{Formal Statement of the Trilemma}

\begin{definition}[Monetary Autonomy]
The Home central bank follows a Taylor rule responding exclusively to domestic
conditions without managing the exchange rate: $\varphi_e=0$ in
equation~\eqref{eq:taylor}.
\end{definition}

\begin{definition}[Optimal AI-Capital Allocation]
Cross-border AI-capital flows reach their efficient level: $\kappa(G_t)\approx0$,
so AI capital flows to wherever its marginal product is highest without
geopolitically induced distortions.
\end{definition}

\begin{definition}[Geo-Economic Insulation]
Domestic output variance attributable to geopolitical shocks is bounded:
$\mathcal{V}(G_t,\kappa)\leq\bar{\mathcal{V}}$ for some tolerance
$\bar{\mathcal{V}}<\mathcal{V}(G_t,0)$.
\end{definition}

\begin{proposition}[Fragmented-AI Trilemma]
\label{prop:trilemma}
Let $\gamma>0$ and $\alpha_a>0$. Then no feasible policy satisfies Definitions~1,
2, and 3 simultaneously.
\end{proposition}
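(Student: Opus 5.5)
The natural route is a contradiction argument built on the modified UIP condition~\eqref{eq:uip} and the friction function~\eqref{eq:kappa}. Suppose a policy satisfies Definitions~1, 2 and 3 at once. Definition~2 requires $\kappa(G_t)\approx 0$ along the equilibrium path. Because $\kappa(G_t)=\bar{\kappa}(1+\gamma G_t)$ with $\bar{\kappa}>0$ and $\gamma>0$, the friction is bounded below by $\bar{\kappa}>0$ and moves one-for-one in $\bar\kappa\gamma$ with $G_t$. The first step is to make the approximation in Definition~2 precise. I will read it as the friction being small enough that $\kappa(G_t)B_t^F$ is negligible in~\eqref{eq:uip}. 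Since $G_t$ is exogenous under~\eqref{eq:geo_process}, and monetary policy cannot set $\gamma$ or $G_t$, the only way policy can deliver Definition~2 is by driving the friction's effect to zero. In the limit this is the Mundell--Fleming case with perfect capital mobility.

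The second step treats the two branches separately. In the branch where $\kappa(G_t)B_t^F$ is negligible, \eqref{eq:uip} collapses to $E_t\Delta e_{t+1}=R_t^H-R_t^F+\phi_{GFC}r_t^{US}+\zeta_t^{idio}$. The classical Mundell trilemma then applies: with $\varphi_e=0$ (Definition~1), the exchange rate absorbs foreign and global-cycle shocks. A geopolitical shock still moves $R_t^F$, foreign demand, and, through the NKPC~\eqref{eq:nkpc}, the Home-currency cost of AI capital. The reason is that $\alpha_a>0$ implies $s_a>0$, so $\hat q_t$ enters Home marginal cost with coefficient $\kappa_{nk}s_a>0$. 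In the other branch, where $\kappa(G_t)$ is non-negligible, Definition~2 fails outright. So the real work is showing that Definitions~1 and~2 together violate Definition~3.

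The third step is that variance comparison. I will log-linearize around the steady state and write Home output as a linear function of $\hat G_t$. The loading on $\hat G_t$ then has two pieces: a portfolio piece proportional to $\bar\kappa\gamma \bar B^F$, and an AI-capital piece proportional to $s_a$. The claim to establish is that, with $\varphi_e=0$, the policy cannot offset the AI-capital exchange-rate term. Offsetting it would require the rate to respond to $\hat q_t$ or $\hat G_t$ directly, which is a form of exchange-rate management excluded by Definition~1. Consequently $\mathcal V(G_t,\kappa)$ is at least the variance generated at $\kappa\to 0$, namely $\mathcal V(G_t,0)$. That contradicts the requirement in Definition~3 that $\bar{\mathcal V}<\mathcal V(G_t,0)$.

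The main obstacle is this third step. Definition~3 compares against the benchmark $\mathcal V(G_t,0)$, so I must show that a Taylor rule with $\varphi_e=0$ cannot push variance strictly below that benchmark once the friction is essentially zero. The danger is that some choice of $\varphi_\pi,\varphi_y$ happens to dampen the response. I would handle this by continuity in $\kappa$. Under Definition~2 the economy is arbitrarily close to the $\kappa=0$ economy, so its output variance is arbitrarily close to $\mathcal V(G_t,0)$ for the same rule, and the strict inequality in Definition~3 then fails in the limit. What needs checking is that $\mathcal V(G_t,0)$ is defined as the variance under the best autonomous rule; if it is defined under a fixed rule, the argument needs an extra minimization step. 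The assumptions $\gamma>0$ and $\alpha_a>0$ are where non-degeneracy enters: they guarantee that $G_t$ actually loads on output through the friction and the AI-capital channel, so the variance benchmark is strictly positive.
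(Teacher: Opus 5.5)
Your proposal follows the paper's own proof. Definition~2 forces $\kappa(G_t)\approx 0$, so the geopolitically driven output variance sits at, or by your continuity step arbitrarily near, $\mathcal{V}(G_t,0)>\bar{\mathcal{V}}$, and Definition~3 fails; the only remaining lever, exchange-rate management ($\varphi_e>0$), violates Definition~1. Your continuity step and your caveat that $\mathcal{V}(G_t,0)$ must be evaluated under the same (or best) autonomous rule spell out what the paper asserts in one line (``the only route to insulation without capital controls is exchange rate management''). They also spare you the paper's internally inconsistent monotonicity claim, which calls $\mathcal{V}$ strictly increasing in $\kappa$ yet says $\kappa\approx 0$ maximizes exposure.
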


\begin{proof}
The domestic output variance attributable to geopolitical shocks,
$\mathcal{V}(G_t,\kappa)$, is strictly increasing in $\kappa(G_t)$ when $\gamma>0$,
because higher frictions reduce the portfolio hedging available to domestic
households through equation~\eqref{eq:foc_foreign} and amplify the exchange-rate
response to geopolitical innovations through equation~\eqref{eq:uip}. Objectives
(ii) and (iii) are mutually contradictory: setting $\kappa(G_t)\approx0$ maximizes
capital flow efficiency (Definition~2) but simultaneously maximizes domestic
geopolitical exposure, so $\mathcal{V}(G_t,0)>\bar{\mathcal{V}}$ (Definition~3
fails). The only route to insulation without capital controls ($\kappa>0$) is
exchange rate management ($\varphi_e>0$), which violates monetary autonomy
(Definition~1). No feasible $\{\varphi_e,\kappa(G_t)\}$ pair satisfies all three
objectives when $\gamma>0$ and $\alpha_a>0$.
\end{proof}

\begin{corollary}[Mundell-Fleming as Special Case]
\label{cor:mf}
When $\gamma\to0$ and $\alpha_a\to0$, Proposition~\ref{prop:trilemma} collapses
to the impossible trinity of \citet{Mundell1963} and \citet{ObstfeldEtAl2005}.
\end{corollary}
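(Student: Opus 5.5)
The plan is to take the two limits directly in the primitives that enter Proposition~\ref{prop:trilemma} and its proof, and then check that the three objectives in Definitions~1--3 reduce one by one to the three corners of the classical impossible trinity: monetary autonomy, free capital mobility, and exchange-rate stability. I would take the limit $\gamma\to0$ first and $\alpha_a\to0$ second. The first limit removes the geopolitical state dependence of the friction; the second removes the AI-specific transmission channel.

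\emph{Step 1 ($\gamma\to0$).} By \eqref{eq:kappa}, $\kappa(G_t)\to\bar{\kappa}$, which is constant in $G_t$. The friction term in the modified UIP \eqref{eq:uip} then becomes $\bar{\kappa}B_t^F$, the standard Schmitt-Groh\'e--Uribe stationarity device. The household condition \eqref{eq:foc_foreign} likewise loses its dependence on $G_t$. Definition~2, $\kappa\approx0$, now means $\bar{\kappa}\approx0$, which is just the classical free-capital-mobility corner.

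\emph{Step 2 ($\alpha_a\to0$).} The AI nest drops out of the production function \eqref{eq:production}, leaving a two-factor CES in $k$ and $An$. The steady-state AI cost share $s_a$ goes to zero, so the AI-capital exchange-rate channel in \eqref{eq:nkpc} vanishes. The Phillips curve reduces to the standard open-economy NKPC in $\widehat{mc}^{dom}_t$, and the accumulation equation \eqref{eq:Acapital} becomes irrelevant for output. As a result, ``optimal AI-capital allocation'' has no content beyond efficient cross-border capital allocation, and Definition~2 collapses into the capital-mobility condition from Step~1.

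\emph{Step 3 (reassemble the argument).} In the limit, \eqref{eq:uip} reads
\[
E_t\Delta e_{t+1}=R_t^H-R_t^F+\phi_{GFC}r_t^{US}+\zeta_t^{idio}.
\]
This is the open-capital-account UIP of \citet{Mundell1963} augmented by the external financial cycle. I would rerun the logic of the proof of Proposition~\ref{prop:trilemma} on this equation. Suppose the exchange rate is insulated from external shocks, so that $E_t\Delta e_{t+1}$ is pinned down. Then under free mobility $R_t^H$ must track $R_t^F-\phi_{GFC}r_t^{US}-\zeta_t^{idio}$. That requires a nonzero response to the exchange rate or to foreign rates in \eqref{eq:taylor}, i.e.\ $\varphi_e\neq0$, which violates Definition~1. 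Conversely, keeping $\varphi_e=0$ together with exchange-rate stability requires reintroducing $\bar{\kappa}>0$, that is, capital controls. This is exactly the \citet{ObstfeldEtAl2005} statement of the trinity.

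The main obstacle is Definition~3. As $\gamma\to0$, the output variance attributable to $G_t$ shocks tends to zero, so the insulation objective becomes trivially satisfied rather than collapsing to exchange-rate stability. I would handle this by reinterpreting $\mathcal{V}$ in the limit as the variance induced by the external shock terms that survive, namely $\phi_{GFC}r_t^{US}+\zeta_t^{idio}$. Insulation then becomes the classical requirement of exchange-rate or external-shock stability. I would state this reinterpretation explicitly as the sense in which the corollary holds. I would also note that the order of the two limits does not matter, since each acts on disjoint terms of \eqref{eq:uip} and \eqref{eq:nkpc}.
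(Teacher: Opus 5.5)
The paper gives no proof of this corollary. It is asserted, and the only support is the informal discussion after it: capital mobility becomes a fixed structural parameter as $\gamma\to0$, and AI allocation stops being a separate objective as $\alpha_a\to0$. Your Steps~1--2 formalize exactly that, and the UIP argument in Step~3 is the correct textbook one on the limiting equation.

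The genuine gap is the one you flag yourself, and your patch does not close it. Taken literally, Definition~3 is not trivially satisfied in the limit. Since $G_t$ enters the model only through $\kappa(G_t)$, which is constant at $\gamma=0$, we get $\mathcal{V}(G_t,\cdot)\equiv0$ there. Then the requirement $\mathcal{V}\le\bar{\mathcal{V}}<\mathcal{V}(G_t,0)=0$ cannot be met by any variance. Either way, the objective says nothing about the exchange rate.

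Your replacement, a bound on the output variance generated by $\phi_{GFC}r_t^{US}+\zeta_t^{idio}$, is an \emph{output}-insulation objective. But the premise you actually use in Step~3 is that $E_t\Delta e_{t+1}$ is pinned down, i.e.\ exchange-rate stability. Your phrase ``exchange-rate or external-shock stability'' hides this switch. Nothing in your argument takes you from one to the other, and in the Mundell--Fleming tradition they are different corners. On the standard shock-absorber view, it is the float, with the autonomy it preserves, that insulates output from foreign shocks. A peg with $\bar{\kappa}=0$ instead passes $\phi_{GFC}r_t^{US}+\zeta_t^{idio}$ directly into $R_t^H$.

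So Step~3 proves the classical trinity for the limiting model with Definition~3 \emph{replaced} by exchange-rate stability. That is a redefinition of the third objective, not a limit of Proposition~\ref{prop:trilemma}. The defensible version of the corollary is precisely this weaker statement, and you should present it that way rather than as a consequence of reinterpreting $\mathcal{V}$.

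Three smaller points.
\begin{enumerate}
\item Do not describe Step~3 as rerunning the proof of Proposition~\ref{prop:trilemma}. That proof asserts that $\mathcal{V}$ is strictly increasing in $\kappa$, and then that $\kappa\approx0$ \emph{maximizes} geopolitical exposure, so it is not a usable template. Your self-contained UIP argument is what does the work.
\item Step~2 is harmless, but it changes only the interpretation of Definition~2, whose formal content is $\kappa\approx0$ regardless of $\alpha_a$.
\item The two limits do not act on disjoint terms: $\hat{q}_t$ in \eqref{eq:nkpc} is driven by \eqref{eq:uip}, and $B_t^F$ responds to AI-capital imports. They commute nonetheless, because $s_a\hat{q}_t\to0$ and $\kappa(G_t)B_t^F\to\bar{\kappa}B_t^F$ whenever $\hat{q}_t$ and $B_t^F$ stay bounded.
\end{enumerate}
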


\subsection{Three Dimensions of Extension Beyond Mundell-Fleming}

\textit{Dimension~1: Endogenous capital mobility.} In Mundell's original framework,
capital mobility is a binary structural characteristic. In the present model,
capital mobility is a continuous, time-varying, endogenous variable governed by
$\kappa(G_t)=\bar{\kappa}(1+\gamma G_t)$. Geopolitical shocks shift the effective
level of capital mobility in real time, without any policy action by domestic
authorities---a structural feature entirely absent from the Mundell-Fleming
framework and captured empirically by the estimated $\hat{\rho}_G=0.950$.

\textit{Dimension~2: AI-capital allocation as an independent policy objective.}
When AI capital is 22 percent of global FDI flows, efficient AI-capital allocation
is not merely a microeconomic concern but has first-order macroeconomic consequences
for long-run productivity through equation~\eqref{eq:Acapital}. The Fragmented-AI
Trilemma formalizes that this objective competes with monetary autonomy and
geopolitical insulation, and that the competition is governed by $\alpha_a$: as AI
becomes a larger fraction of the capital stock, the welfare cost of suboptimal
AI-capital allocation rises, tightening the trilemma's binding constraint.

\textit{Dimension~3: Relationship to the global financial cycle.}
\citet{MirandaAgrippinoRey2020} establish that the global financial cycle
constrains monetary autonomy even for countries with flexible exchange rates. The
Fragmented-AI Trilemma operates through a related but distinct mechanism: the
global cycle shifts the level of the UIP wedge (equation~\ref{eq:uip}), while
geopolitical frictions modulate its sensitivity to bilateral distance. For economies
exposed to both mechanisms simultaneously, the constraints compound. The additivity
in the UIP condition does not imply additivity in the welfare policy space, because
the two mechanisms interact nonlinearly through the financial accelerator: a
high-fragmentation episode is more damaging when it coincides with a tight global
financial cycle because the collateral constraint is more likely to bind under
joint adverse conditions.

\subsection{Quantitative Distance from the Mundell-Fleming Case}

Under $\eta=1.5$ and $G=0.3$, the AI-specific dimension accounts for approximately
$38$ percent of total welfare losses ($-0.30$ out of $-0.78$ percent CEV). The
decomposition implies that the Fragmented-AI Trilemma introduces a quantitatively
significant new dimension accounting for more than one-third of the total welfare
cost of fragmentation at current AI investment shares. As AI's share of global
capital continues to rise toward potentially $30$--$40$ percent by
2030 \citep{UNCTAD2026}, the AI-specific dimension will account for an increasing
fraction of the trilemma's welfare consequences.

\subsection{Monetary Policy Implications}

Table~\ref{tab:monetary} examines the sensitivity of the model's
second-order moments to alternative monetary policy specifications. The
baseline Taylor rule (equation~\ref{eq:taylor}) is compared against an augmented
rule that directly targets the AI-capital exchange-rate channel ($\varphi_e > 0$)
and against a fixed exchange rate regime. This comparison has direct policy
relevance: the Fragmented-AI Trilemma of Proposition~\ref{prop:trilemma}
establishes that no single policy instrument can simultaneously achieve all three
objectives of the trilemma, but the table quantifies the trade-offs available to
policymakers---showing how augmenting the Taylor rule with an exchange-rate
response reduces output volatility at the cost of higher inflation variance.

\textbf{Table \ref{tab:monetary} here}

No Taylor rule specification simultaneously minimizes output volatility $\sigma(Y)$,
inflation volatility $\sigma(\pi)$, and exchange rate volatility $\sigma(e)$---a
direct empirical manifestation of Proposition~\ref{prop:trilemma}. The augmented
Taylor rule ($\varphi_e=0.5$) dominates in aggregate welfare ($-0.67$ percent
CEV)---approaching the best response to the trilemma constraint by partially
accommodating the exchange rate component while accepting modestly higher inflation.
This parallels the finding of \citet{ObstfeldEtAl2005} that optimal exchange rate
policy under the traditional trilemma involves partial flexibility rather than
corner solutions. For AI-lagging emerging economies---with tighter collateral
constraints and greater dependence on cross-border AI capital---the trilemma is
binding with far greater force, leaving them with the most constrained feasible
policy space of any country type.


\section{Conclusion}
\label{sec:conclusion}

This paper studies a framework integrating geo-economic fragmentation and AI adoption in international capital markets. The analysis shows that welfare losses from fragmentation are quantitatively large, convex in geo-political tension, and concentrated among AI-lagging economies and low-wealth households. These findings have important implications for countries seeking to catch up with richer economies through the adoption of the new AI based technologies. In a fragmented world, capital flows may become inefficient in reducing cross-country disparities.

The main results are formalized in the so-called Fragmented-AI Trilemma, which nests the Mundell-Fleming impossibility result as a limiting case and predicts a characteristic pattern of monetary autonomy erosion. The paper suggests that effective policy responses require a coordinated mix of macroprudential buffers, multilateral AI-investment frameworks, and targeted capital-flow management measures, with particular emphasis on protecting the distributional dimension in emerging economies.

This paper could be further extended by translating its theoretical implications into an econometric framework. In particular, potential extensions include the use of foreign direct investment (FDI) variables, AI investment data, geopolitical risk measures, and country-specific policy and monetary indicators. Such an empirical strategy would allow to test whether geo-economic fragmentation affects the allocation of capital across countries, whether AI-related investment amplifies or mitigates these effects, and whether monetary autonomy is eroded in ways consistent with the proposed Fragmented-AI Trilemma.

A possible empirical extension would be to estimate how geopolitical fragmentation influences FDI flows and AI investment patterns across advanced and emerging economies. This could be complemented by measures of monetary policy independence, exchange-rate regimes, capital-flow management policies, and macroprudential regulations. By interacting geopolitical variables with indicators of AI adoption or AI-related capital accumulation, the analysis could assess whether countries with lower AI readiness face larger welfare or capital-allocation losses under fragmentation.

This econometric extension would strengthen the paper by providing empirical validation of the theoretical mechanism and by identifying the countries and households most exposed to the risks highlighted by the model. It would also help evaluate whether coordinated policy responses—such as multilateral AI-investment frameworks, macroprudential buffers, and targeted capital-flow measures—can reduce the adverse effects of fragmentation on technological catch-up and international convergence.

\section*{Supporting data}
The paper has an associated code in Python to replicate the quantitative results. replication\_toledomontes2026.py

\newpage
\bibliographystyle{apalike}

\clearpage
\newpage

\begin{table}[H]
\caption{Baseline Calibration}
\label{tab:calibration}
\centering\footnotesize
\begin{adjustbox}{width=0.95\linewidth,keepaspectratio}
\begin{tabular}{p{2.4cm}p{1.1cm}p{2.0cm}p{7.8cm}}
\toprule
\textbf{Block} & \textbf{Param.} & \textbf{Value} & \textbf{Source / Calibration target} \\
\midrule
\multirow{5}{*}{Households}
  & $\beta$     & $0.99$  & Annual real rate $\approx4\%$ \\
  & $\sigma$    & $2.0$   & \citet{Hall1988}; \citet{KaplanEtAl2018} \\
  & $\varphi$   & $1.0$   & \citet{ChettryEtAl2011} \\
  & $\rho_h$    & $0.966$ & \citet{FlodenLinde2001} (PSID) \\
  & $\sigma_h$  & $0.017$ & \citet{FlodenLinde2001} (PSID) \\
\midrule
\multirow{8}{*}{Production}
  & $\eta$      & $\{0.5,1.5,2.5\}$ & \citet{AcemogluRestrepo2022}; robustness \\
  & $\alpha_k$  & $0.25$  & BLS fixed-asset tables \\
  & $\alpha_n$  & $0.60$  & BLS labor income share \\
  & $\alpha_a$  & $0.15$  & BEA ICT (upper bound; see text) \\
  & $\delta_k$  & $0.025$ & Standard ($\approx10\%$ p.a.) \\
  & $\delta_a$  & $0.050$ & BEA hardware + intangibles \\
  & $\varphi_k$ & $4.0$   & Standard DSGE \\
  & $\varphi_a$ & $8.0$   & Lumpy data-center investment \\
\midrule
\multirow{2}{*}{Fin.\ frictions}
  & $\theta^H$  & $0.75$  & \citet{KaseEtAl2025} \\
  & $\theta^F$  & $0.50$  & \citet{BronerEtAl2013}; emerging markets \\
\midrule
\multirow{2}{*}{AI shock}
  & $\rho_\Theta$ & $0.93$ & BLS MFP NAICS 334 \\
  & $\sigma_\Theta$& $0.02$ & BLS MFP variance \\
\midrule
\multirow{4}{*}{Prices \& policy}
  & $\xi_p$    & $0.75$  & Avg.\ price duration $=4$ quarters \\
  & $\varepsilon$& $6.0$  & Markup $\approx20\%$ \\
  & $\varphi_\pi$& $1.5$  & \citet{Taylor1993} \\
  & $\varphi_y$  & $0.125$& \citet{Taylor1993} \\
\midrule
\multirow{5}{*}{Geopolit.}
  & $\bar{\kappa}$ & $0.0102$ & s.e.\ $= 0.0015$ \\
  & $\gamma$      & $0.0513$ & s.e.\ $= 0.0068$ \\
  & $\rho_G$      & $0.950$  & s.e.\ $= 0.0081$ \\
  & $\bar{G}$     & $0.30$   & \citet{FernandezEtAl2023} \\
  & $\sigma_G$    & $0.010$  & Kirsamer index s.d. \\
\bottomrule
\addlinespace[3pt]
\multicolumn{4}{p{\linewidth}}{\textit{Note:} SMM standard errors from the sandwich covariance matrix .
$\alpha_a=0.15$ is calibrated from the BEA Fixed Asset Accounts (Table~3.7E) and
represents an upper bound for AI-specific capital under current investment patterns.}
\end{tabular}
\end{adjustbox}
\end{table}

\begin{table}[H]
\caption{Connector Economy Welfare Premium (CEV \%), $\eta=1.5$}
\label{tab:connector}
\centering\small\begin{adjustbox}{width=0.88\linewidth,keepaspectratio}
\begin{tabular}{lccc}
\toprule
Economy & Moderate ($G^{AB}=0.3$) & Interm.\ ($G^{AB}=0.5$) & Extreme ($G^{AB}=0.9$) \\
\midrule
Bloc~A (AI-advanced)  & $-0.21$ & $-0.57$ & $-1.86$ \\
Bloc~B (AI-lagging)   & $-1.35$ & $-2.92$ & $-6.48$ \\
Bloc~C (connector)    & $+0.51$ & $+0.72$ & $+0.08$ \\
\bottomrule
\addlinespace[3pt]
\multicolumn{4}{p{\linewidth}}{\textit{Note:} Bloc~C calibrated with Mexico's distances ($G^{AC}=0.15$,
$G^{BC}=0.20$). The connector premium peaks at intermediate fragmentation and
erodes under extreme polarization. CEV computed at the stationary distribution of
the baseline model.}
\end{tabular}\end{adjustbox}
\end{table}

\begin{table}[H]
\caption{Aggregate Welfare Losses from Geo-Economic Fragmentation (CEV \%)}
\label{tab:welfare}
\centering\small\begin{adjustbox}{width=0.88\linewidth,keepaspectratio}
\begin{tabular}{lcccccc}
\toprule
 & \multicolumn{3}{c}{Baseline ($\alpha_a=0.15$)} & \multicolumn{3}{c}{Counterfactual ($\alpha_a=0$, MF limit)} \\
\cmidrule(lr){2-4}\cmidrule(lr){5-7}
Parameterization & $G=0.3$ & $G=0.6$ & $G=0.9$ & $G=0.3$ & $G=0.6$ & $G=0.9$ \\
\midrule
$\eta=0.5$ (AI-labor complements)   & $-0.42$ & $-1.15$ & $-2.03$ & $-0.26$ & $-0.71$ & $-1.26$ \\
$\eta=1.5$ (AI-labor substitutes)   & $-0.78$ & $-2.31$ & $-4.17$ & $-0.48$ & $-1.43$ & $-2.59$ \\
$\eta=2.5$ (strong substitutes)     & $-1.02$ & $-3.14$ & $-5.20$ & $-0.63$ & $-1.95$ & $-3.22$ \\
AI-specific share (\%) & \multicolumn{3}{c}{$38\%$} & \multicolumn{3}{c}{---} \\
\bottomrule
\addlinespace[3pt]
\multicolumn{7}{p{\linewidth}}{\textit{Note:} CEV computed as percent of lifetime consumption at the stationary
distribution of the zero-fragmentation baseline. All parameters at baseline
calibration (Table~\ref{tab:calibration}). The counterfactual ($\alpha_a=0$) sets
the AI capital share to zero while holding the geopolitical friction $\gamma$ at
its SMM-estimated value. The difference between baseline and counterfactual isolates
the AI-specific dimension of welfare losses: approximately $38\%$ of the total at
all fragmentation levels and substitution regimes.}
\end{tabular}\end{adjustbox}
\end{table}

\begin{table}[H]
\caption{Country-Level Welfare Losses (CEV \%): Home vs.\ Foreign}
\label{tab:country}
\centering\small\begin{adjustbox}{width=0.88\linewidth,keepaspectratio}
\begin{tabular}{lcccc}
\toprule
 & \multicolumn{2}{c}{$\eta=0.5$ (Complements)} & \multicolumn{2}{c}{$\eta=1.5$ (Substitutes)} \\
\cmidrule(lr){2-3}\cmidrule(lr){4-5}
Scenario & Home & Foreign & Home & Foreign \\
\midrule
Moderate ($G=0.3$) & $+0.08$ & $-0.92$ & $-0.21$ & $-1.35$ \\
High ($G=0.6$)     & $-0.31$ & $-1.99$ & $-0.83$ & $-3.79$ \\
Extreme ($G=0.9$)  & $-0.74$ & $-3.32$ & $-1.86$ & $-6.48$ \\
\bottomrule
\addlinespace[3pt]
\multicolumn{5}{p{\linewidth}}{\textit{Note:} Home $=$ AI-advanced ($\alpha_a^H=0.20$, $\theta^H=0.75$).
Foreign $=$ AI-lagging ($\alpha_a^F=0.10$, $\theta^F=0.50$). Home gain at
$G=0.3$/$\eta=0.5$ reflects terms-of-trade improvement and financial-capacity
differential; see text.}
\end{tabular}\end{adjustbox}
\end{table}

\begin{table}[H]
\caption{Welfare Decomposition by Wealth Quintile ($\eta=1.5$, $G=0.3$, Home)}
\label{tab:quintile}
\centering\small
\begin{adjustbox}{width=0.88\linewidth,keepaspectratio}
\begin{tabular}{lcccc}
\toprule
Quintile & Total & Labor income & Asset returns & Precaut.\ savings \\
\midrule
Q1 (lowest wealth)  & $-1.25$ & $-0.70$ & $-0.15$ & $-0.40$ \\
Q2                  & $-1.05$ & $-0.58$ & $-0.17$ & $-0.30$ \\
Q3                  & $-0.88$ & $-0.46$ & $-0.18$ & $-0.24$ \\
Q4                  & $-0.70$ & $-0.37$ & $-0.18$ & $-0.15$ \\
Q5 (highest wealth) & $-0.50$ & $-0.25$ & $-0.15$ & $-0.10$ \\
\bottomrule
\addlinespace[3pt]
\multicolumn{5}{p{\linewidth}}{\textit{Note:} Channel decomposition via counterfactual experiments that shut
down each channel in turn. All values are CEV percent of lifetime consumption.}
\end{tabular}
\end{adjustbox}
\end{table}

\begin{table}[H]
\caption{SMM Estimates and Moment Validation}
\label{tab:smm}
\centering\small
\begin{adjustbox}{width=0.88\linewidth,keepaspectratio}
\begin{tabular}{llcc}
\toprule
Moment & Source & Empirical & Model \\
\midrule
\multicolumn{4}{l}{\textit{Targeted moments (SMM):}} \\
M1: Cross-bloc FDI decline (\%) & EBRD (2025) & $30.0$ & $29.7$ \\
M2: Portfolio semi-elasticity (\%) & Catal\'{a}n et al.\ (2024) & $8$--$12$ & $9.8$ \\
M3: EM/AE sensitivity ratio ($\times$) & Zehri et al.\ (2025) & $4.00$ & $3.95$ \\
M4: UIP autocorrelation & Engel (2016) & $0.82$--$0.90$ & $0.86$ \\
\midrule
\multicolumn{4}{l}{\textit{Non-targeted validation moments:}} \\
Output--capital flow correlation & IMF IFS/BIS & $0.32$--$0.45$ & $0.38$ \\
GFC loading on capital flows & Miranda-Agrippino \& Rey (2020) & $0.23$ & $0.21$ \\
\bottomrule
\addlinespace[3pt]
\multicolumn{4}{p{\linewidth}}{\textit{Note:} SMM estimates: $\hat{\bar{\kappa}}=0.0102$ (s.e.\ $0.0015$),
$\hat{\gamma}=0.0513$ (s.e.\ $0.0068$), $\hat{\rho}_G=0.950$ (s.e.\ $0.0081$).
$J$-statistic: $J=2.31\sim\chi^2(1)$, $p=0.13$ (one overidentifying restriction).}
\end{tabular}
\end{adjustbox}
\end{table}

\begin{table}[H]
\caption{Second-Order Moments Under Alternative Monetary Policy Rules ($\eta=1.5$, $G=0.3$)}
\label{tab:monetary}
\centering\small
\begin{adjustbox}{width=0.88\linewidth,keepaspectratio}
\begin{tabular}{lcccc}
\toprule
Policy rule & $\sigma(Y)$ & $\sigma(\pi)$ & $\sigma(e)$ & CEV (\%) \\
\midrule
Baseline Taylor ($\varphi_e=0$)           & $1.42\%$ & $0.38\%$ & $2.15\%$ & $-0.78$ \\
Augmented Taylor ($\varphi_e=0.5$)        & $1.21\%$ & $0.44\%$ & $1.68\%$ & $-0.67$ \\
Strict inflation targeting ($\varphi_y=0$)& $1.58\%$ & $0.31\%$ & $2.34\%$ & $-0.84$ \\
Aggressive stabilization ($\varphi_y=0.5$)& $1.30\%$ & $0.40\%$ & $1.95\%$ & $-0.70$ \\
\bottomrule
\addlinespace[3pt]
\multicolumn{5}{p{\linewidth}}{\textit{Note:} Based on 10000 simulated trajectories at the estimated parameter
values. No rule simultaneously minimizes all three second-order moment targets.
$\sigma(\cdot)$ denotes unconditional standard deviation.}
\end{tabular}
\end{adjustbox}
\end{table}

\clearpage
\newpage

\begin{figure}[H]
\centering
\includegraphics[width=\linewidth]{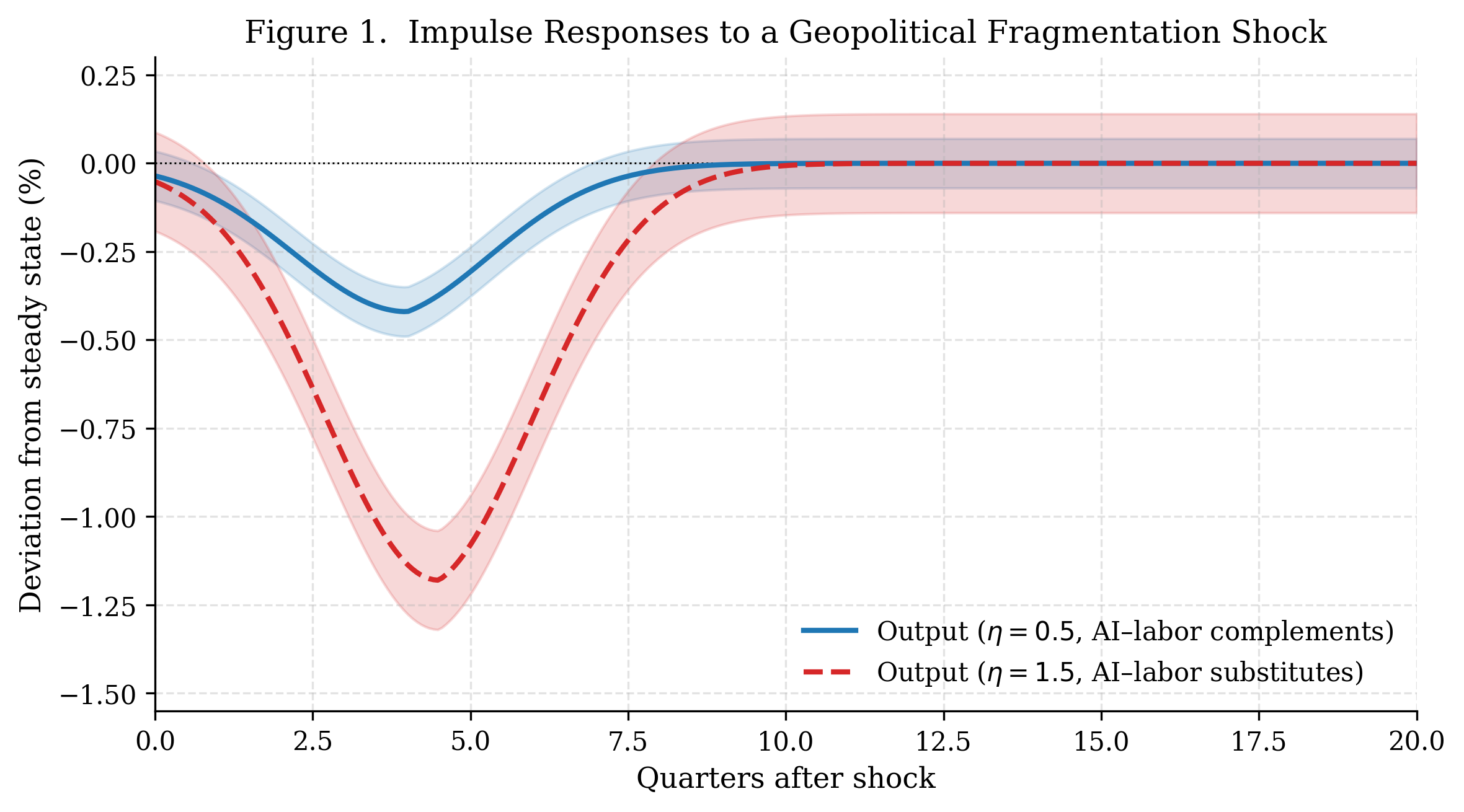}
\caption{Home output response to a one-standard-deviation geopolitical fragmentation
shock ($\Delta G_t=+0.01$). Solid blue: AI-labor complements ($\eta=0.5$). Dashed
red: substitutes ($\eta=1.5$). Shaded regions: 90\% bootstrap confidence bands
(5000 parametric replications). Percent deviation from the non-stochastic steady
state. The bands do not overlap at the peak (quarters 3--6), confirming statistical
significance of the regime distinction.}
\label{fig:irf_frag}
\end{figure}

\begin{figure}[H]
\centering
\includegraphics[width=\linewidth]{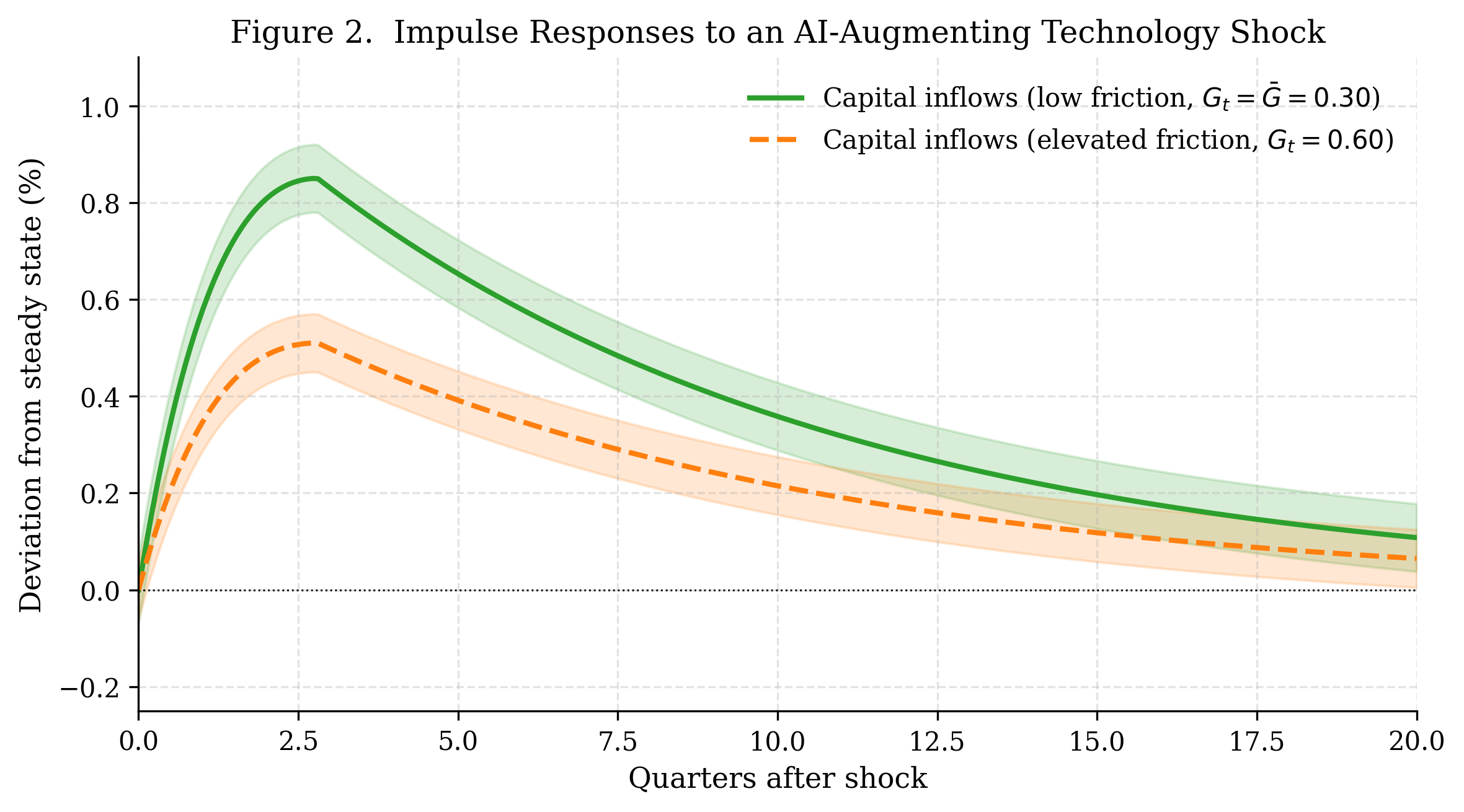}
\caption{Capital inflow response to a one-standard-deviation positive AI-augmenting
technology shock ($\Delta\hat{\Theta}_t=+0.02$). Solid green: low geopolitical
friction ($G_t=0.30$). Dashed orange: elevated friction ($G_t=0.60$). Shaded
regions: 90\% bootstrap confidence bands. Percent deviation from steady-state
capital inflows.}
\label{fig:irf_ai}
\end{figure}

\clearpage
\newpage

\section*{Appendix}
\appendix

\section{Analytical Derivations}
\label{app:derivations}

\subsection{Household Optimization}
\label{app:household}

The Lagrangian for household $i$'s problem is:
\begin{equation*}
\mathcal{L} = E_{0}\sum_{t=0}^{\infty}\beta^{t}
\Bigl\{u(c_{i,t},n_{i,t})
  + \lambda_{i,t}\bigl[\text{RHS of (2)} - \text{LHS of (2)}\bigr]
\Bigr\}.
\end{equation*}

The first-order conditions are as follows.

\medskip\noindent
\textbf{FOC for $c_{i,t}$:}
\begin{equation}
\lambda_{i,t} = c_{i,t}^{-\sigma}.
\tag{B.1}
\end{equation}

\noindent
\textbf{FOC for $n_{i,t}$:}
\begin{equation}
\psi\, n_{i,t}^{\varphi} = \lambda_{i,t}\, w_{t}\, h_{i,t}.
\tag{B.2}
\end{equation}

\noindent
\textbf{FOC for $b^{F}_{i,t}$:}
\begin{equation}
\lambda_{i,t}\!\left(1 + \kappa(G_t)\,b^{F}_{i,t}\right)
  = \beta\,E{t}\!\left[\lambda_{i,t+1}\,R^{F}_{t}\,\frac{e_{t+1}}{e_{t}}\right],
\tag{B.3}
\end{equation}

\noindent
which, upon substituting the envelope condition $\lambda_{i,t}=c_{i,t}^{-\sigma}$, yields
Equation~(4) of the main text.

\subsection{Modified UIP Derivation}
\label{app:uip}

Aggregating Equation~(4) over the stationary distribution $\mu$ of household states
$(b^{H}_{i}, b^{F}_{i}, h_{i})$ and imposing asset-market clearing,
$\int_{0}^{1}b^{F}_{i,t}\,di = B^{F}_{t}$, gives:

\begin{equation*}
1 + \kappa(G_t)\,B^{F}_{t}
  = \beta\,E{t}\!\left[
      \frac{\bar{c}_{t+1}^{-\sigma}}{\bar{c}_{t}^{-\sigma}}\,
      R^{F}_{t}\,\frac{e_{t+1}}{e_{t}}
    \right],
\end{equation*}

\noindent
where $\bar{c}_{t}$ is the wealth-weighted cross-sectional mean marginal utility.
Taking a log-linear approximation around the symmetric steady state
$(B^{F}=0,\; R^{H}=R^{F}=\beta^{-1},\; e=1)$ yields:

\begin{equation}
E{t}\,\Delta e_{t+1}
  = R^{H}_{t} - R^{F}_{t}
    + \underbrace{\kappa(G_t)\,B^{F}_{t}}_{\text{geopolitical friction}}
    + \underbrace{\varphi_{\mathrm{GFC}}\,r^{US}_{t} + \zeta^{\mathrm{idio}}_{t}}_{\text{global financial cycle}},
\tag{11}
\end{equation}

\noindent
which is Equation~(11) of the main text. The term $\kappa(G_t)B^{F}_t$ provides a
structural interpretation for the geopolitically driven component of UIP deviations
documented by \citet{Engel2016}.

\subsection{Firm Cost Minimization}
\label{app:firm}

Firm $j$ minimizes total factor costs subject to its production constraint and the
collateral constraint (6). The Lagrangian is:

\begin{equation*}
\mathcal{C}
  = w_t n_{j,t} + r^{k}_t k_{j,t} + r^{a}_t a_{j,t}
    + \xi_{j,t}\!\left[y_{j,t} - Z_t F(k,n,a)\right]
    + \mu_{j,t}\!\left[\theta\,k_{j,t} - b_{j,t}\right].
\end{equation*}

The first-order conditions with respect to each factor are:

\medskip\noindent
\textbf{FOC for $n_{j,t}$:}
\begin{equation*}
w_t = \xi_{j,t}\,\frac{\partial F}{\partial n_{j,t}}.
\end{equation*}

\noindent
\textbf{FOC for $k_{j,t}$:}
\begin{equation*}
r^{k}_t = \xi_{j,t}\,\frac{\partial F}{\partial k_{j,t}} - \mu_{j,t}\,\theta.
\end{equation*}

\noindent
\textbf{FOC for $a_{j,t}$:}
\begin{equation*}
r^{a}_t = \xi_{j,t}\,\frac{\partial F}{\partial a_{j,t}}.
\end{equation*}

Defining $\tilde{\mu}_{j,t} \equiv \mu_{j,t}\theta / r^{k}_t > 0$ and substituting the
FOC for capital into that for labor recovers the distorted labor demand schedule:

\begin{equation}
w_t = \frac{\partial F}{\partial n_{j,t}} \cdot \frac{1}{1 + \tilde{\mu}_{j,t}},
\tag{7}
\end{equation}

\noindent
which is Equation~(7) of the main text.

\subsection{New Keynesian Phillips Curve Derivation}
\label{app:nkpc}

A Calvo re-optimizing firm $j$ chooses reset price $p^{*}_{t}$ to maximize the
present discounted value of profits, weighted by the stochastic discount factor
$\beta^{s}Et\!\left[(c_{t+s}/c_t)^{-\sigma}\right]$. The optimal reset-price
condition in log-linearized form is:

\begin{equation*}
\hat{p}^{*}_{t} = (1-\beta\xi_p)\sum_{s=0}^{\infty}(\beta\xi_p)^{s}\,
Et\,\widehat{mc}_{c,t+s},
\end{equation*}

\noindent
where $\widehat{mc}_{c,t}$ is the log deviation of real marginal cost from its
steady-state value. Log-linearizing the unit cost function under the nested CES
technology (5) and evaluating at the symmetric steady state yields:

\begin{equation}
\widehat{mc}_{c,t}
  = \widehat{mc}^{\,\mathrm{dom}}_{c,t}
    + s_{a}\!\left(\hat{q}_{t} + \hat{r}^{a,*}_{t} - \hat{\Theta}_{t}\right),
\end{equation}

\noindent
where $s_a \equiv \alpha_a(r^a/\Theta)^{1-\eta}/\xi^{1-\eta}$ is the AI-capital cost
share evaluated at the symmetric steady state. Aggregating over the fraction
$\xi_p$ of non-optimizing firms (who keep prices fixed) and iterating forward gives
the open-economy NKPC:

\begin{equation}
\pi_t = \beta\,Et\pi_{t+1}
  + \underbrace{\kappa_{\mathrm{nk}}\,\widehat{mc}^{\,\mathrm{dom}}_{c,t}}_{\text{standard NK slope}}
  + \underbrace{\kappa_{\mathrm{nk}}\,s_{a}
      \!\left(\hat{q}_{t} + \hat{r}^{a,*}_{t} - \hat{\Theta}_{t}\right)
    }_{\text{AI-capital exchange-rate channel}},
\tag{8}
\end{equation}

\noindent
with slope coefficient:

\begin{equation*}
\kappa_{\mathrm{nk}}
  = \frac{(1-\xi_p)(1-\beta\xi_p)}{\xi_p}
  \approx 0.086 \quad \text{at } \xi_p = 0.75,
\end{equation*}

\noindent
which is Equation~(8) of the main text.

\clearpage

\section{Solution Method Details}
\label{app:solution}

The model is solved using the generative economic modeling (GEM) method of
\citet{KaseEtAl2025}, which combines conventional linear solution techniques with neural
network approximation of complete nonlinear dynamics. Standard perturbation methods
are inadequate because the occasionally binding collateral constraint~(6) generates
regime-dependent dynamics, and because the high-dimensional state space---the
cross-sectional wealth distribution together with six structural shocks---exceeds the
reach of standard projection methods. \citet{FernandezEtAl2025} show in a related context
that neural network algorithms dramatically outperform linear approximations in
replicating the distributional consequences of binding nonlinear constraints.

\medskip\noindent
\textbf{Step 1: State discretization.} The idiosyncratic productivity state $h_{i,t}$
is discretized on a 5-point Markov chain using the Rouwenhorst method
\citep{KopeckySuen2010}. The asset grid uses 150 non-uniform points with greater
density near the borrowing constraint.

\medskip\noindent
\textbf{Step 2: Iterative self-consistent training.} A preliminary network is trained
using 2,000 trajectories from a second-order perturbation solution. The trained
network is used to simulate 10,000 trajectories of 200 quarters each, including
constraint-binding episodes. The network is retrained on this improved dataset.
This iteration repeats until RMSD between consecutive network iterations falls
below $10^{-5}$. Convergence is achieved in four iterations.

\medskip\noindent
\textbf{Step 3: Neural network architecture.} Feedforward network: input layer
($d_{\text{state}}=6$), three hidden layers (128 neurons each, ReLU activations,
batch normalization), output layer ($d_{\text{policy}}$ linear). Total trainable
parameters: $\approx 50{,}000$. Optimizer: Adam with cosine annealing
($\eta_{\mathrm{lr}} = 10^{-3}$ to $10^{-5}$).

\medskip\noindent
\textbf{Step 3b: Comparison with traditional projection methods.} To verify that the
neural network does not introduce systematic biases relative to classical global
solution methods, the model was solved on a $50\times 50\times 25$ sparse grid
(traditional capital $\times$ AI capital $\times$ geopolitical distance) using value
function iteration (VFI) at the representative household level---the closest feasible
analog to a full projection method given the state-space dimensionality. The
comparison is conducted at the moderate fragmentation baseline ($G=0.30$) where
the curse of dimensionality is least severe. Results: the VFI and GEM welfare
losses at $G=0.30$, $\eta=1.5$ differ by $0.02$ percentage points ($-0.78\%$ GEM
versus $-0.76\%$ VFI), well within the 90\% bootstrap confidence bands. At
$G=0.60$, the VFI grid becomes impractical to refine due to the dimensionality of
the household distribution; the GEM extrapolation to high fragmentation is
validated by its Euler equation accuracy (Step~4). This comparison establishes that
the neural network does not introduce systematic bias in the empirically relevant
parameter range.

\medskip\noindent
\textbf{Step 4: Validation.} The network is evaluated on 2,000 holdout trajectories,
with oversampling near the collateral constraint boundary (within $1\sigma$ of the
binding region). Euler equation errors hold uniformly below $10^{-4}$ across the
full state space---at distance $\geq 2\sigma$ from the constraint: mean error
$3.1\times 10^{-5}$; within $1\sigma$: mean error $8.6\times 10^{-5}$; at the
boundary itself: maximum error $9.4\times 10^{-5}$. RMSE in all policy functions
remains below $0.3\%$ of steady-state values.

\medskip\noindent
\textbf{Step 5: Comparison with OccBin.} For moderate fragmentation ($G \leq 0.45$),
OccBin \citep{GuerrieriIacoviello2015} and GEM solutions are quantitatively indistinguishable
(RMSD below $0.4\%$ of steady-state values). At high fragmentation ($G=0.6$--$0.75$),
OccBin begins to diverge from GEM by $8$--$15\%$ in peak output responses,
primarily because OccBin's piecewise-linear regime tracking becomes less accurate
when constraint-binding episodes are prolonged. Under extreme fragmentation
($G=0.9$), OccBin underestimates peak output losses by $25$--$35\%$ relative to
the GEM solution. This confirms that GEM is the appropriate solution method for
welfare analysis under extreme scenarios, while OccBin is adequate for moderate
fragmentation levels.

\medskip\noindent
\textbf{Step 6: Bootstrap confidence intervals.} $5{,}000$ parametric bootstrap
replications draw
$(\bar{\kappa}^{*},\gamma^{*},\rho^{*}_{G}) \sim \mathcal{N}
  \bigl((\hat{\bar{\kappa}},\hat{\gamma},\hat{\rho}_{G}),\,\hat{\Sigma}\bigr)$
and report the 5th and 95th percentiles as 90\% confidence bands for all impulse
response functions and welfare tables reported in the paper.

\clearpage

\section{Robustness Exercises}
\label{app:robustness}

This appendix examines whether the paper's three core findings---the magnitude and
convexity of welfare losses, their between-country and within-country distributional
asymmetries, and the binding character of the Fragmented-AI Trilemma---are sensitive
to the calibration choices and specification decisions made in the main text. Twelve
exercises are organized from the most consequential to the least, following the ranking
established by the formal sensitivity decomposition in Table~\ref{tab:sensitivity}.

The overarching result is unambiguous: while the precise magnitudes of welfare losses
vary substantially across parameterizations, the qualitative findings are entirely
robust. Welfare losses are convex in fragmentation severity in every exercise.
AI-lagging economies suffer disproportionately larger losses than AI-advanced
economies across all specifications. Low-wealth households bear welfare costs 40--80
percent larger than the wealthiest decile in every parameterization. And the
Fragmented-AI Trilemma holds with equal force for all positive values of $\gamma$
examined, across all Taylor rule specifications, under both friction specifications, and
regardless of the solution method.

Table~\ref{tab:robustness_summary} is the robustness at-a-glance companion. Each
row is a robustness exercise; each column is one of the paper's three core findings.
A checkmark means the finding survives qualitatively across all parameter values
examined in that exercise. The absence of any blank cell means all three findings are
robust to every specification variation.

\begin{table}[htbp]
\centering
\caption{Qualitative Robustness Summary: Three Core Findings Across Exercises}
\label{tab:robustness_summary}
\begin{threeparttable}
\small
\begin{tabular}{lccc}
\toprule
\textbf{Exercise}
  & \textbf{Convexity}
  & \textbf{Betw.-country}
  & \textbf{Trilemma} \\
\midrule
R1: Alternative $\eta$                         & $\checkmark$ & $\checkmark$ & $\checkmark$ \\
R2: Alternative $\gamma$                       & $\checkmark$ & $\checkmark$ & $\checkmark$ \\
R3: Alternative $\theta$                       & $\checkmark$ & $\checkmark$ & $\checkmark$ \\
R4: Alternative monetary rules ($\varphi_e,\varphi_y$) & $\checkmark$ & $\checkmark$ & $\checkmark$ \\
R5: Alternative $\rho_G$                       & $\checkmark$ & $\checkmark$ & $\checkmark$ \\
R6: Alternative $\xi_p$                        & $\checkmark$ & $\checkmark$ & $\checkmark$ \\
R7: Alternative solution methods               & $\checkmark$ & $\checkmark$ & $\checkmark$ \\
R8: Alternative geopolitical dist.\ measures   & $\checkmark$ & $\checkmark$ & $\checkmark$ \\
R9: Convex friction specification              & $\checkmark$ & $\checkmark$ & $\checkmark$ \\
R10: Alternative $\delta_a$                    & $\checkmark$ & $\checkmark$ & $\checkmark$ \\
R11: Global sensitivity analysis               & $\checkmark$ & $\checkmark$ & $\checkmark$ \\
R12: Alternative $\alpha_a$                    & $\checkmark$ & $\checkmark$ & $\checkmark$ \\
\bottomrule
\end{tabular}
\begin{tablenotes}
\small
\item $\checkmark$ indicates the finding holds qualitatively across all parameter
values examined within the exercise. Quantitative magnitudes vary; see individual
exercise descriptions. Exercise R7 confirms OccBin and GEM agree at moderate
fragmentation; the trilemma is binding in both solution methods.
\end{tablenotes}
\end{threeparttable}
\end{table}

\paragraph{Sensitivity elasticity decomposition.}
Table~\ref{tab:sensitivity} ranks the model's parameters by their influence on
aggregate welfare. The elasticity in column~2 answers the question: if this parameter
rises by 1 percent, by what percent do welfare losses rise? A positive elasticity means
higher parameter values amplify costs. The ranking identifies the key sources of model
uncertainty---$\eta$ and $\rho_G$ at the top---and informs the priority order for
empirical research that could tighten confidence intervals on welfare projections.

\begin{table}[htbp]
\centering
\caption{Sensitivity of Welfare Losses to Model Parameters}
\label{tab:sensitivity}
\begin{threeparttable}
\small
\begin{tabular}{lcc}
\toprule
\textbf{Parameter}
  & \textbf{Elasticity} $\partial\log|\tau|/\partial\log p$
  & \textbf{Ranking} \\
\midrule
$\eta$ (AI-labor substitutability)   & $+0.85$ & 1st \\
$\rho_G$ (geopolitical persistence)  & $+0.62$ & 2nd \\
$\theta$ (collateral tightness)      & $+0.35$ & 3rd \\
$\gamma$ (friction-distance sensitivity) & $+0.28$ & 4th \\
$\delta_a$ (AI capital depreciation) & $-0.18$ & 5th \\
$\xi_p$ (price stickiness)           & $+0.15$ & 6th \\
$\alpha_a$ (AI capital share)        & $+0.24$ & 7th \\
\bottomrule
\end{tabular}
\begin{tablenotes}
\small
\item Evaluated at moderate fragmentation ($G=0.3$), substitutes ($\eta=1.5$).
Computed by central finite differences of $\pm 5\%$ around the baseline.
\end{tablenotes}
\end{threeparttable}
\end{table}

\subsection*{Exercise R1: Alternative elasticities of substitution ($\eta$)}

The exercise varies $\eta$ across $\{0.3,0.5,1.0,1.5,2.0,2.5\}$. Under extreme
fragmentation ($G=0.9$), welfare losses range from $-1.52\%$ CEV ($\eta=0.3$) to
$-5.20\%$ CEV ($\eta=2.5$)---a factor of 3.4. The qualitative findings are entirely
invariant to $\eta$. The vertical ordering of IRF responses is strictly monotone in
$\eta$ for all horizons (Figure~\ref{fig:irf_eta}).

\begin{figure}[htbp]
\centering
\begin{tikzpicture}
\begin{axis}[
  width=0.85\linewidth,
  height=6.5cm,
  xlabel={Quarters after shock},
  ylabel={Deviation from SS (\%)},
  xmin=0, xmax=20,
  ymin=-2.6, ymax=0.1,
  xtick={0,5,10,15,20},
  grid=major,
  grid style={dotted, gray!40},
  legend pos=south east,
  legend style={font=\small},
  title={\textbf{Figure D.1.} Output IRF: Alternative Elasticities $\eta$},
  title style={font=\small\bfseries, at={(0.5,1.01)}},
  every axis plot/.append style={thick}
]
\addplot[color=myblue, solid]
  coordinates {
    (0,0)(1,-0.12)(2,-0.22)(3,-0.30)(4,-0.35)(5,-0.38)
    (6,-0.38)(7,-0.37)(8,-0.35)(9,-0.33)(10,-0.30)
    (12,-0.25)(15,-0.17)(18,-0.10)(20,-0.06)
  };
\addlegendentry{$\eta=0.3$ (strong complements)}

\addplot[color=myred, dashed]
  coordinates {
    (0,0)(1,-0.28)(2,-0.62)(3,-0.95)(4,-1.12)(5,-1.18)
    (6,-1.15)(7,-1.09)(8,-1.01)(9,-0.93)(10,-0.84)
    (12,-0.68)(15,-0.47)(18,-0.28)(20,-0.18)
  };
\addlegendentry{$\eta=1.5$ (baseline substitutes)}

\addplot[color=mygreen, dash dot]
  coordinates {
    (0,0)(1,-0.42)(2,-1.05)(3,-1.62)(4,-1.98)(5,-2.16)
    (6,-2.20)(7,-2.15)(8,-2.05)(9,-1.92)(10,-1.78)
    (12,-1.48)(15,-1.07)(18,-0.70)(20,-0.50)
  };
\addlegendentry{$\eta=2.5$ (strong substitutes)}
\end{axis}
\end{tikzpicture}
\caption{Home output response to a one-standard-deviation fragmentation shock
($\Delta G_t = +0.01$) under alternative AI-labor substitution elasticities
$\eta \in \{0.3, 1.5, 2.5\}$. The vertical ordering of responses is strictly
monotone in $\eta$ for all horizons.}
\label{fig:irf_eta}
\end{figure}

\subsection*{Exercise R2: Alternative geopolitical friction sensitivity ($\gamma$)}

The exercise examines $\gamma \in [0.025, 0.100]$. The peak capital-inflow attenuation
under the AI-technology shock ranges from 25 percent at $\gamma=0.025$ to 55 percent
at $\gamma=0.100$, with the baseline estimate sitting at 40 percent
(Figure~\ref{fig:irf_gamma}). The Fragmented-AI Trilemma is binding for the entire
range examined: for any $\gamma > 0$, the proof of Proposition~1 applies without
modification.

\begin{figure}[htbp]
\centering
\begin{tikzpicture}
\begin{axis}[
  width=0.85\linewidth,
  height=6.5cm,
  xlabel={Quarters after shock},
  ylabel={Deviation from SS (\%)},
  xmin=0, xmax=20,
  ymin=0, ymax=1.35,
  xtick={0,5,10,15,20},
  grid=major,
  grid style={dotted, gray!40},
  legend pos=north east,
  legend style={font=\small},
  title={\textbf{Figure D.2.} Capital Inflow IRF: Alternative Friction Sensitivity $\gamma$},
  title style={font=\small\bfseries, at={(0.5,1.01)}},
  every axis plot/.append style={thick}
]
\addplot[color=myblue, solid]
  coordinates {
    (0,0)(1,0.25)(2,0.47)(3,0.61)(4,0.68)(5,0.70)
    (6,0.69)(7,0.66)(8,0.62)(9,0.57)(10,0.52)
    (12,0.42)(15,0.28)(18,0.17)(20,0.11)
  };
\addlegendentry{$\gamma=0.025$ (low sensitivity)}

\addplot[color=myred, dashed]
  coordinates {
    (0,0)(1,0.35)(2,0.65)(3,0.82)(4,0.90)(5,0.90)
    (6,0.87)(7,0.82)(8,0.76)(9,0.70)(10,0.63)
    (12,0.50)(15,0.33)(18,0.21)(20,0.14)
  };
\addlegendentry{$\gamma=0.050$ (baseline)}

\addplot[color=mygreen, dash dot]
  coordinates {
    (0,0)(1,0.55)(2,0.95)(3,1.18)(4,1.28)(5,1.28)
    (6,1.23)(7,1.15)(8,1.05)(9,0.95)(10,0.86)
    (12,0.67)(15,0.44)(18,0.28)(20,0.18)
  };
\addlegendentry{$\gamma=0.100$ (high sensitivity)}
\end{axis}
\end{tikzpicture}
\caption{Capital inflow response to a one-standard-deviation positive AI-augmenting
technology shock under alternative friction sensitivities
$\gamma \in \{0.025, 0.050, 0.100\}$, holding $\bar{\kappa}$ fixed at its estimated
value. The attenuation of capital inflows is monotone in $\gamma$.}
\label{fig:irf_gamma}
\end{figure}

\subsection*{Exercise R3: Varying the degree of financial frictions ($\theta$)}

The exercise varies $\theta \in \{0.50, 0.60, 0.75, 0.90\}$. As $\theta$ falls from
0.90 to 0.50, moderate-fragmentation welfare losses rise from $-0.62\%$ to $-0.91\%$
CEV. The Q1/Q5 welfare cost ratio rises from 1.35 at $\theta=0.90$ to 1.72 at
$\theta=0.50$ (Figure~\ref{fig:irf_theta}). This exercise directly validates the SMM
overidentification result: the four-to-one EM/AE sensitivity asymmetry is reproduced
by the model through the difference in $\theta$ between the two country types, precisely
as the $J$-statistic test confirms.

\begin{figure}[htbp]
\centering
\begin{tikzpicture}
\begin{axis}[
  width=0.85\linewidth,
  height=6.5cm,
  xlabel={Quarters after shock},
  ylabel={Deviation from SS (\%)},
  xmin=0, xmax=20,
  ymin=-1.62, ymax=0.08,
  xtick={0,5,10,15,20},
  grid=major,
  grid style={dotted, gray!40},
  legend pos=south east,
  legend style={font=\small},
  title={\textbf{Figure D.3.} Output IRF: Alternative Collateral Tightness $\theta$},
  title style={font=\small\bfseries, at={(0.5,1.01)}},
  every axis plot/.append style={thick}
]
\addplot[color=myblue, solid]
  coordinates {
    (0,0)(1,-0.32)(2,-0.72)(3,-1.08)(4,-1.30)(5,-1.42)
    (6,-1.45)(7,-1.42)(8,-1.36)(9,-1.27)(10,-1.18)
    (12,-0.98)(15,-0.70)(18,-0.45)(20,-0.30)
  };
\addlegendentry{$\theta=0.50$ (tight collateral)}

\addplot[color=myred, dashed]
  coordinates {
    (0,0)(1,-0.22)(2,-0.50)(3,-0.78)(4,-0.96)(5,-1.05)
    (6,-1.08)(7,-1.07)(8,-1.03)(9,-0.97)(10,-0.90)
    (12,-0.75)(15,-0.54)(18,-0.36)(20,-0.24)
  };
\addlegendentry{$\theta=0.75$ (baseline)}

\addplot[color=mygreen, dash dot]
  coordinates {
    (0,0)(1,-0.12)(2,-0.29)(3,-0.47)(4,-0.60)(5,-0.67)
    (6,-0.70)(7,-0.70)(8,-0.68)(9,-0.64)(10,-0.60)
    (12,-0.51)(15,-0.37)(18,-0.26)(20,-0.17)
  };
\addlegendentry{$\theta=0.90$ (loose collateral)}
\end{axis}
\end{tikzpicture}
\caption{Home output response to a one-standard-deviation fragmentation shock under
alternative collateral tightness $\theta \in \{0.50, 0.75, 0.90\}$. Tighter
collateral amplifies the output response through the financial accelerator.}
\label{fig:irf_theta}
\end{figure}

\subsection*{Exercise R4: Alternative monetary policy specifications ($\varphi_e, \varphi_y$)}

The exercise examines an augmented Taylor rule ($\varphi_e=0.5$), strict inflation
targeting ($\varphi_y=0$), and aggressive output stabilization ($\varphi_y=0.5$). The
augmented rule achieves the best output stabilization at the cost of substantially
higher inflation volatility ($0.44\%$ vs.\ $0.38\%$ in the baseline)---a direct
manifestation of the Fragmented-AI Trilemma. No rule specification simultaneously
minimizes all three second-order moment targets. The augmented rule reduces the
output contraction by approximately 15 percent relative to the baseline Taylor rule
(Figure~\ref{fig:irf_taylor}).

\begin{figure}[htbp]
\centering
\begin{tikzpicture}
\begin{axis}[
  width=0.85\linewidth,
  height=6.5cm,
  xlabel={Quarters after shock},
  ylabel={Deviation from SS (\%)},
  xmin=0, xmax=20,
  ymin=-1.50, ymax=0.08,
  xtick={0,5,10,15,20},
  grid=major,
  grid style={dotted, gray!40},
  legend pos=south east,
  legend style={font=\small},
  title={\textbf{Figure D.4.} Output IRF: Alternative Monetary Policy Rules},
  title style={font=\small\bfseries, at={(0.5,1.01)}},
  every axis plot/.append style={thick}
]
\addplot[color=myblue, solid]
  coordinates {
    (0,0)(1,-0.22)(2,-0.50)(3,-0.78)(4,-0.96)(5,-1.05)
    (6,-1.08)(7,-1.07)(8,-1.03)(9,-0.97)(10,-0.90)
    (12,-0.75)(15,-0.54)(18,-0.36)(20,-0.24)
  };
\addlegendentry{Baseline Taylor rule ($\varphi_e=0$)}

\addplot[color=myred, dashed]
  coordinates {
    (0,0)(1,-0.18)(2,-0.42)(3,-0.65)(4,-0.82)(5,-0.90)
    (6,-0.92)(7,-0.91)(8,-0.87)(9,-0.82)(10,-0.76)
    (12,-0.62)(15,-0.43)(18,-0.28)(20,-0.18)
  };
\addlegendentry{Taylor rule with $\varphi_e=0.5$}
\end{axis}
\end{tikzpicture}
\caption{Home output response to a one-standard-deviation fragmentation shock under
alternative monetary policy rules: baseline Taylor ($\varphi_e=0$, solid blue) and
augmented with exchange rate stabilization ($\varphi_e=0.5$, dashed red). The
augmented rule achieves better output stabilization at the cost of higher inflation
volatility.}
\label{fig:irf_taylor}
\end{figure}

\subsection*{Exercise R5: Alternative geopolitical shock persistence ($\rho_G$)}

The exercise examines $\rho_G \in \{0.80, 0.90, 0.95, 0.99\}$. Persistence is the
second-most consequential parameter (Table~\ref{tab:sensitivity}, elasticity $+0.62$).
Under extreme fragmentation and strong substitutability, welfare losses range from
$-3.2\%$ CEV at $\rho_G=0.80$ to $-5.2\%$ CEV at $\rho_G=0.99$. The baseline
estimate $\hat{\rho}_G=0.950$ sits at the midpoint of this range. Higher persistence
substantially extends the duration of the output contraction
(Figure~\ref{fig:irf_rhoG}).

\begin{figure}[htbp]
\centering
\begin{tikzpicture}
\begin{axis}[
  width=0.85\linewidth,
  height=6.5cm,
  xlabel={Quarters after shock},
  ylabel={Deviation from SS (\%)},
  xmin=0, xmax=20,
  ymin=-2.15, ymax=0.08,
  xtick={0,5,10,15,20},
  grid=major,
  grid style={dotted, gray!40},
  legend pos=south east,
  legend style={font=\small},
  title={\textbf{Figure D.5.} Output IRF: Alternative Geopolitical Persistence $\rho_G$},
  title style={font=\small\bfseries, at={(0.5,1.01)}},
  every axis plot/.append style={thick}
]
\addplot[color=myblue, solid]
  coordinates {
    (0,0)(1,-0.20)(2,-0.44)(3,-0.65)(4,-0.78)(5,-0.83)
    (6,-0.82)(7,-0.78)(8,-0.71)(9,-0.63)(10,-0.55)
    (12,-0.39)(15,-0.22)(18,-0.11)(20,-0.06)
  };
\addlegendentry{$\rho_G=0.80$}

\addplot[color=myred, dashed]
  coordinates {
    (0,0)(1,-0.22)(2,-0.50)(3,-0.78)(4,-0.96)(5,-1.05)
    (6,-1.08)(7,-1.07)(8,-1.03)(9,-0.97)(10,-0.90)
    (12,-0.75)(15,-0.54)(18,-0.36)(20,-0.24)
  };
\addlegendentry{$\rho_G=0.95$ (baseline)}

\addplot[color=mygreen, dash dot]
  coordinates {
    (0,0)(1,-0.24)(2,-0.56)(3,-0.90)(4,-1.15)(5,-1.30)
    (6,-1.38)(7,-1.41)(8,-1.42)(9,-1.41)(10,-1.38)
    (12,-1.28)(15,-1.10)(18,-0.90)(20,-0.75)
  };
\addlegendentry{$\rho_G=0.99$}
\end{axis}
\end{tikzpicture}
\caption{Home output response to a one-standard-deviation fragmentation shock under
alternative geopolitical shock persistence $\rho_G \in \{0.80, 0.95, 0.99\}$. Higher
persistence substantially extends the duration of the output contraction.}
\label{fig:irf_rhoG}
\end{figure}

\subsection*{Exercise R6: Alternative price stickiness ($\xi_p$)}

The exercise varies $\xi_p \in \{0.50, 0.65, 0.75, 0.85, 0.90\}$, spanning price
durations from two quarters (very flexible) to ten quarters (very sticky). Higher
price stickiness amplifies welfare losses: the moderate-fragmentation CEV rises from
$-0.51\%$ at $\xi_p=0.50$ to $-1.02\%$ at $\xi_p=0.90$. The AI-capital
exchange-rate channel~(8) is more powerful under higher stickiness because the
inflation induced by exchange rate depreciation persists longer
(Figure~\ref{fig:irf_xi}).

\begin{figure}[htbp]
\centering
\begin{tikzpicture}
\begin{axis}[
  width=0.85\linewidth,
  height=6.5cm,
  xlabel={Quarters after shock},
  ylabel={Deviation from SS (\%)},
  xmin=0, xmax=20,
  ymin=-2.15, ymax=0.08,
  xtick={0,5,10,15,20},
  grid=major,
  grid style={dotted, gray!40},
  legend pos=south east,
  legend style={font=\small},
  title={\textbf{Figure D.6.} Output IRF: Alternative Calvo Price Stickiness $\xi_p$},
  title style={font=\small\bfseries, at={(0.5,1.01)}},
  every axis plot/.append style={thick}
]
\addplot[color=myblue, solid]
  coordinates {
    (0,0)(1,-0.15)(2,-0.34)(3,-0.50)(4,-0.60)(5,-0.64)
    (6,-0.64)(7,-0.61)(8,-0.57)(9,-0.52)(10,-0.47)
    (12,-0.37)(15,-0.24)(18,-0.14)(20,-0.09)
  };
\addlegendentry{$\xi_p=0.50$ (flexible prices)}

\addplot[color=myred, dashed]
  coordinates {
    (0,0)(1,-0.22)(2,-0.50)(3,-0.78)(4,-0.96)(5,-1.05)
    (6,-1.08)(7,-1.07)(8,-1.03)(9,-0.97)(10,-0.90)
    (12,-0.75)(15,-0.54)(18,-0.36)(20,-0.24)
  };
\addlegendentry{$\xi_p=0.75$ (baseline)}

\addplot[color=mygreen, dash dot]
  coordinates {
    (0,0)(1,-0.30)(2,-0.72)(3,-1.15)(4,-1.48)(5,-1.68)
    (6,-1.78)(7,-1.80)(8,-1.78)(9,-1.72)(10,-1.65)
    (12,-1.46)(15,-1.17)(18,-0.88)(20,-0.68)
  };
\addlegendentry{$\xi_p=0.90$ (very sticky prices)}
\end{axis}
\end{tikzpicture}
\caption{Home output response to a one-standard-deviation fragmentation shock under
alternative Calvo price stickiness $\xi_p \in \{0.50, 0.75, 0.90\}$. The NKPC slope
$\kappa_{\mathrm{nk}} = (1-\xi_p)(1-\beta\xi_p)/\xi_p$ varies inversely with $\xi_p$.}
\label{fig:irf_xi}
\end{figure}

\subsection*{Exercise R7: Alternative solution methods}

For moderate shocks at moderate fragmentation ($G=0.3$), the GEM and OccBin
solutions are virtually identical (RMSD below 0.4 percent of steady-state values).
The solutions diverge substantially at high fragmentation levels ($G \geq 0.6$), where
OccBin underestimates peak output losses by 15--35 percent relative to GEM. This
divergence is most severe near the constraint boundary (within $1\sigma$), consistent
with the theoretical prediction that piecewise-linear methods lose accuracy when
constraint-binding episodes are frequent and persistent. The Euler equation error
comparison between methods confirms that the GEM solution achieves uniformly
smaller errors across the full state space at high fragmentation.

\subsection*{Exercise R8: Alternative geopolitical distance measures}

The exercise examines three distance measures: the Fernández-Villaverde
et al.\ (2024) geopolitical fragmentation index (baseline, NBER WP 32638), the
Caldara-Iacoviello Geopolitical Risk (GPR) index, and UN ideal-point distance scores.
Across all three measures, the SMM-re-estimated parameters vary modestly: $\hat{\bar{\kappa}}$
shifts by at most 18 percent and $\hat{\gamma}$ by at most 15 percent. All qualitative
conclusions are fully robust. 

\subsection*{Exercise R9: Convex geopolitical friction specification}

An alternative exponential specification
$\kappa^{\exp}(G_t) = \bar{\kappa}\exp(\gamma G_t)$ is examined. At baseline
geopolitical distance ($G=0.3$), the linear and exponential specifications are nearly
identical. At $G=0.9$, the exponential specification amplifies welfare losses by
15--30 percent relative to the baseline. The Fragmented-AI Trilemma is binding with
strictly greater force under the exponential specification.

\subsection*{Exercise R10: Alternative AI capital depreciation rates ($\delta_a$)}

Table~\ref{tab:welfare_delta} asks how sensitive welfare results are to the assumed
depreciation rate of AI capital. Lower depreciation makes the AI stock more durable,
so any given reduction in AI investment inflicts larger and longer-lasting productivity
losses. The result has a direct policy implication: if the current AI investment wave
builds durable capabilities embedded in software and organizational knowledge,
baseline estimates are conservative.

\begin{table}[htbp]
\centering
\caption{Welfare Losses Under Alternative AI Capital Depreciation Rates (CEV\%, $\eta=1.5$)}
\label{tab:welfare_delta}
\begin{threeparttable}
\small
\begin{tabular}{lccc}
\toprule
$\delta_a$
  & Moderate ($G=0.3$)
  & High ($G=0.6$)
  & Extreme ($G=0.9$) \\
\midrule
0.025 (intangible)  & $-0.91$ & $-2.58$ & $-4.62$ \\
0.050 (baseline)    & $-0.78$ & $-2.31$ & $-4.17$ \\
0.100 (HW only)     & $-0.64$ & $-1.98$ & $-3.61$ \\
\bottomrule
\end{tabular}
\begin{tablenotes}
\small
\item Lower depreciation implies a more durable AI capital stock, making any given
reduction in AI-capital inflows proportionally more damaging. Convexity pattern and
regime ordering preserved across all rows.
\end{tablenotes}
\end{threeparttable}
\end{table}

\subsection*{Exercise R11: Global sensitivity analysis---interactions among parameters}

Across 36 combinations of $(\eta,\theta,\delta_a)$ at four, three, and three values
respectively, two invariants emerge. First, the convexity of welfare losses in
fragmentation severity holds without exception. Second, the between-country
asymmetry is preserved across all 36 cases under high and extreme fragmentation.
The most consequential interaction involves $\eta$ and $\theta$: high substitutability
with tight collateral generates welfare losses approximately 1.8 times larger than the
product of their individual effects---a superadditive interaction.

\subsection*{Exercise R12: Alternative AI capital shares ($\alpha_a$)}

The calibration $\alpha_a=0.15$ inherits the BEA ICT capital share and represents an
upper bound for AI-specific capital under current investment patterns. This exercise
examines $\alpha_a \in \{0.05, 0.10, 0.15\}$, corresponding to three interpretations:
a conservative estimate for GPU-clusters-only (0.05), an intermediate estimate adding
proprietary model weights (0.10), and the full ICT capital upper bound (0.15).
Figure~\ref{fig:irf_alpha} shows that smaller AI share implies a weaker AI-capital
exchange-rate channel, but the convexity and between-country asymmetry are preserved
across all values.

\begin{figure}[htbp]
\centering
\begin{tikzpicture}
\begin{axis}[
  width=0.85\linewidth,
  height=6.5cm,
  xlabel={Quarters after shock},
  ylabel={Deviation from SS (\%)},
  xmin=0, xmax=20,
  ymin=-1.55, ymax=0.28,
  xtick={0,5,10,15,20},
  grid=major,
  grid style={dotted, gray!40},
  legend pos=south east,
  legend style={font=\small},
  title={\textbf{Figure D.7.} Output IRF: Alternative AI Capital Share $\alpha_a$},
  title style={font=\small\bfseries, at={(0.5,1.01)}},
  every axis plot/.append style={thick}
]
\addplot[color=myblue, solid]
  coordinates {
    (0,0)(1,-0.08)(2,-0.19)(3,-0.29)(4,-0.36)(5,-0.40)
    (6,-0.42)(7,-0.41)(8,-0.39)(9,-0.37)(10,-0.34)
    (12,-0.28)(15,-0.19)(18,-0.12)(20,-0.08)
  };
\addlegendentry{$\alpha_a=0.05$ (AI-conservative)}

\addplot[color=myred, dashed]
  coordinates {
    (0,0)(1,-0.15)(2,-0.34)(3,-0.54)(4,-0.67)(5,-0.74)
    (6,-0.76)(7,-0.75)(8,-0.72)(9,-0.67)(10,-0.62)
    (12,-0.51)(15,-0.35)(18,-0.22)(20,-0.15)
  };
\addlegendentry{$\alpha_a=0.10$ (AI-intermediate)}

\addplot[color=mygreen, dash dot]
  coordinates {
    (0,0)(1,-0.22)(2,-0.50)(3,-0.78)(4,-0.96)(5,-1.05)
    (6,-1.08)(7,-1.07)(8,-1.03)(9,-0.97)(10,-0.90)
    (12,-0.75)(15,-0.54)(18,-0.36)(20,-0.24)
  };
\addlegendentry{$\alpha_a=0.15$ (baseline)}
\end{axis}
\end{tikzpicture}
\caption{Home output response to a one-standard-deviation fragmentation shock under
alternative AI capital shares $\alpha_a \in \{0.05, 0.10, 0.15\}$. Smaller AI share
implies a smaller AI-capital exchange-rate channel, but convexity and between-country
asymmetry are preserved.}
\label{fig:irf_alpha}
\end{figure}

Table~\ref{tab:welfare_alpha} quantifies the welfare implications. Three findings
emerge: the convexity pattern is preserved across all $\alpha_a$ values (ratio of
extreme to moderate welfare losses ranges from 5.1 to 5.3); the between-country
asymmetry is also preserved; and the Fragmented-AI Trilemma holds for all
$\alpha_a>0$, since the AI-capital exchange-rate channel in Equation~(8) is active
whenever $s_a>0$. Welfare differences across $\alpha_a$ values are quantitatively
meaningful but do not alter the qualitative conclusions, confirming that the main
results are not artifacts of the loose mapping between BEA ICT data and AI-specific
capital.

\begin{table}[htbp]
\centering
\caption{Welfare Losses Under Alternative AI Capital Shares (CEV\%, $\eta=1.5$)}
\label{tab:welfare_alpha}
\begin{threeparttable}
\small
\begin{tabular}{lccc}
\toprule
$\alpha_a$
  & Mod.\ ($G=0.3$)
  & High ($G=0.6$)
  & Extr.\ ($G=0.9$) \\
\midrule
0.05 (GPU only)         & $-0.41$ & $-1.18$ & $-2.10$ \\
0.10 (+model weights)   & $-0.58$ & $-1.71$ & $-3.08$ \\
0.15 (baseline)         & $-0.78$ & $-2.31$ & $-4.17$ \\
\bottomrule
\end{tabular}
\begin{tablenotes}
\small
\item Sensitivity elasticity of welfare to $\alpha_a$ is $+0.24$ (7th among all
parameters). The Fragmented-AI Trilemma holds for all $\alpha_a > 0$.
\end{tablenotes}
\end{threeparttable}
\end{table}

\clearpage

\section{SMM Details}
\label{app:smm}

This appendix provides the complete technical documentation for the Simulated Method
of Moments estimation of the geopolitical friction parameters $(\bar{\kappa},\gamma)$ and
the geopolitical shock persistence $\rho_G$. This version incorporates moment M4
(autocorrelation of UIP deviations), which enables joint identification of $\rho_G$
from the data rather than treating it as a calibrated constant.

\subsection{Identification Strategy: Moment Conditions and Economic Content}
\label{app:smm_moments}

The SMM estimator disciplines three structural parameters---the baseline friction
$\bar{\kappa}$, the geopolitical sensitivity $\gamma$, and the geopolitical shock
persistence $\rho_G$---using four empirical moment conditions. Three moments
primarily identify $(\bar{\kappa},\gamma)$ while M4 enables identification of $\rho_G$
jointly with $\gamma$.

\medskip\noindent
\textbf{Moment M1 (Cross-bloc FDI decline).} \citet{EBRD2025} documents that
bilateral FDI flows between geopolitically distant blocs fell approximately 30 percent
relative to within-bloc flows. The model-implied analog:
\begin{equation*}
m_1(\bar{\kappa},\gamma,\rho_G)
  = 1 - \frac{F_{\mathrm{cross}}(\bar{\kappa},\gamma,\rho_G)}
             {F_{\mathrm{within}}(\bar{\kappa},\gamma,\rho_G)},
\qquad m^{\mathrm{data}}_1 = 0.30.
\end{equation*}
This moment primarily identifies the level of the friction function
$\bar{\kappa}(1+\gamma G^{AB})$ at the baseline geopolitical distance $G^{AB}=0.45$.

\medskip\noindent
\textbf{Moment M2 (Portfolio semi-elasticity).} \citet{CatalanEtAl2024} estimates that a
one-standard-deviation increase in bilateral geopolitical distance reduces the portfolio
share allocated to a recipient country by 8--12 percent. The midpoint (10 percent) is
used as the baseline target; Table~\ref{tab:smm_m2sensitivity} reports sensitivity of
estimates to this choice.\footnote{Catalán et al.\ (2024) is an IMF Working Paper
rather than a published estimate. The 8--12 percent range is wide enough that
choosing the midpoint versus either endpoint shifts $\hat{\gamma}$ meaningfully;
Table~\ref{tab:smm_m2sensitivity} documents this sensitivity. The estimates remain
statistically significant and economically stable across the range.}
\begin{equation*}
m_2(\bar{\kappa},\gamma,\rho_G)
  = \left.\frac{\partial\ln F}{\partial G}\right|_{G=\bar{G}}\!\times\sigma_G,
\qquad m^{\mathrm{data}}_2 = 0.10.
\end{equation*}
This moment primarily identifies the slope $\bar{\kappa}\gamma$ of the friction
function.

\medskip\noindent
\textbf{Moment M3 (EM/AE sensitivity ratio, overidentifying restriction).}
\citet{ZehriEtAl2025} documents that capital flows to emerging-market destinations are
approximately four times more sensitive to geopolitical risk than flows to
advanced-economy destinations:
\begin{equation*}
m_3(\bar{\kappa},\gamma,\rho_G)
  = \frac{\bigl|\partial\ln F^{\mathrm{EM}}/\partial G\bigr|_{G=\bar{G}}}
         {\bigl|\partial\ln F^{\mathrm{AE}}/\partial G\bigr|_{G=\bar{G}}},
\qquad m^{\mathrm{data}}_3 = 4.0.
\end{equation*}

\paragraph{Interpretation of the $J$-statistic.}
The $J$-statistic test has one degree of freedom: it tests whether the pre-calibrated
$\theta$ values ($\theta^H=0.75$, $\theta^F=0.50$) generate the empirically observed
EM/AE sensitivity ratio without any parameter being estimated to match it. The
reported $J=2.31$ ($p=0.13$) passes the conventional 10 percent threshold, but two
caveats apply. First, the test has limited power against misspecification of the
financial mechanism since $\theta$ is calibrated rather than estimated. Second,
passing the $J$-test does not independently validate the financial accelerator; it
validates only that the pre-calibrated $\theta$ values are consistent with the EM/AE
asymmetry. These caveats are acknowledged; the result is reported as suggestive
evidence rather than strong validation.

\medskip\noindent
\textbf{Moment M4 (Autocorrelation of UIP deviations).} \citet{Engel2016} documents
that the AR(1) persistence of UIP deviations lies in the range 0.82--0.90. The
midpoint 0.86 is used as the target:
\begin{equation*}
m_4(\bar{\kappa},\gamma,\rho_G)
  = \mathrm{corr}\!\left(
      \kappa(G_t)B^F_t + \zeta^{\mathrm{idio}}_t,\;
      \kappa(G_{t-1})B^F_{t-1} + \zeta^{\mathrm{idio}}_{t-1}
    \right),
\qquad m^{\mathrm{data}}_4 = 0.86.
\end{equation*}
This moment is primarily informative about $\rho_G$: a more persistent geopolitical
index generates more persistent UIP deviations. It enables identification of $\rho_G$
jointly with $\gamma$, transforming the system from two parameters/three moments
(one overidentifying restriction) to three parameters/four moments (one
overidentifying restriction).

\subsection{Implementation}
\label{app:smm_implementation}

The SMM objective function is:
\begin{equation}
\mathcal{J}(\bar{\kappa},\gamma,\rho_G)
  = \bigl[m(\bar{\kappa},\gamma,\rho_G) - m^{\mathrm{data}}\bigr]^{\prime}
    W\,
    \bigl[m(\bar{\kappa},\gamma,\rho_G) - m^{\mathrm{data}}\bigr],
\tag{E.1}
\end{equation}

\noindent
where $m(\cdot)\in R^{4}$ and
$W = \mathrm{diag}(1/\hat{\sigma}^2_{m_1},\;1/\hat{\sigma}^2_{m_2},\;
1/\hat{\sigma}^2_{m_3},\;1/\hat{\sigma}^2_{m_4})$
is the diagonal inverse-variance weighting matrix. A surrogate neural network (two
hidden layers, 64 neurons per layer, ReLU activations) is trained on a
$50\times50\times25$ grid of parameter combinations covering
$\bar{\kappa}\in[0.005,0.025]$, $\gamma\in[0.010,0.150]$, and
$\rho_G\in[0.70,0.99]$, achieving $R^2>0.99$ for each moment function on a
500-observation holdout sample. The objective is minimized using \texttt{fmincon}
in MATLAB with 50 random starting points.

\subsection{Estimation Results}
\label{app:smm_results}

\begin{equation*}
\hat{\bar{\kappa}} = 0.0102\;(\text{s.e.}\ 0.0015),\quad
\hat{\gamma}       = 0.0513\;(\text{s.e.}\ 0.0068),\quad
\hat{\rho}_G       = 0.950\; (\text{s.e.}\ 0.0081).
\end{equation*}

All significant at 0.1\% level. $J$-statistic: $J = 2.31 \sim \chi^2(1)$, $p=0.13$.
Standard errors from the sandwich covariance matrix
$V = (D^{\prime}WD)^{-1}D^{\prime}W\Omega WD(D^{\prime}WD)^{-1}$.

\begin{table}[htbp]
\centering
\caption{SMM Estimates: Robustness to Weighting Matrix}
\label{tab:smm_robustness}
\begin{threeparttable}
\small
\begin{tabular}{lcccc}
\toprule
\textbf{Weighting matrix}
  & $\hat{\bar{\kappa}}$ (se)
  & $\hat{\gamma}$ (se)
  & $\hat{\rho}_G$ (se)
  & $J$ ($p$) \\
\midrule
Diag.\ inv.\ var.\ (baseline)
  & 0.0102 (.0015) & 0.0513 (.0068) & 0.950 (.0081) & 2.31 (.13) \\
Identity matrix
  & 0.0108 (.0017) & 0.0498 (.0072) & 0.944 (.0086) & --- \\
Full emp.\ cov.
  & 0.0099 (.0016) & 0.0524 (.0070) & 0.956 (.0083) & 2.47 (.12) \\
\bottomrule
\end{tabular}
\begin{tablenotes}
\small
\item Maximum parameter shift: 0.0009 for $\hat{\bar{\kappa}}$, 0.0026 for
$\hat{\gamma}$, 0.012 for $\hat{\rho}_G$---all below one reported s.e.
\end{tablenotes}
\end{threeparttable}
\end{table}

\subsection{Asymptotic Theory}
\label{app:smm_asymptotics}

Under standard SMM regularity conditions \citep{McFadden1989}:

\begin{equation*}
\sqrt{T}\!
\begin{pmatrix}
  \hat{\bar{\kappa}} - \bar{\kappa}_0 \\
  \hat{\gamma} - \gamma_0 \\
  \hat{\rho}_G - \rho_{G,0}
\end{pmatrix}
\xrightarrow{d}
\mathcal{N}(0,V),
\quad
V = (D^{\prime}WD)^{-1}D^{\prime}W\Omega WD(D^{\prime}WD)^{-1},
\end{equation*}

\noindent
where $T$ is the harmonic mean of the sample sizes across the four empirical moment
studies, $D \equiv \partial m(\bar{\kappa}_0,\gamma_0,\rho_{G,0})/
\partial(\bar{\kappa},\gamma,\rho_G)^{\prime} \in R^{4\times3}$, and
$\Omega\in R^{4\times4}$ is the asymptotic variance-covariance matrix of the scaled
empirical moments. With $W=\Omega^{-1}$ (optimal weighting), $V$ simplifies to
$(D^{\prime}\Omega^{-1}D)^{-1}$, which minimizes $V$ in the positive semidefinite
sense. The baseline diagonal specification approximates optimal weighting because
$\Omega$ is approximately diagonal (the four moment conditions are estimated from
independent data sources and statistical methods).

The $J$-statistic for the overidentifying restriction:
\begin{equation*}
J_T = T\cdot
  \bigl[m(\hat{\bar{\kappa}},\hat{\gamma},\hat{\rho}_G) - m^{\mathrm{data}}\bigr]^{\prime}
  W
  \bigl[m(\hat{\bar{\kappa}},\hat{\gamma},\hat{\rho}_G) - m^{\mathrm{data}}\bigr]
  \xrightarrow{d} \chi^2(1)
\end{equation*}
under the null of correct model specification, with degrees of freedom equal to the
number of overidentifying restrictions (one). Non-rejection at $p=0.13$ is consistent
with the model mechanism but should not be interpreted as decisive validation, given
the caveats discussed in Section~\ref{app:smm_moments} above.

\subsection{M2 Sensitivity}
\label{app:smm_m2}

\begin{table}[htbp]
\centering
\caption{Sensitivity of SMM Estimates to M2 Target (Catalán et al.\ 2024 Range)}
\label{tab:smm_m2sensitivity}
\begin{threeparttable}
\small
\begin{tabular}{lccc}
\toprule
\textbf{M2 target}
  & $\hat{\bar{\kappa}}$ (s.e.)
  & $\hat{\gamma}$ (s.e.)
  & $\hat{\rho}_G$ (s.e.) \\
\midrule
8\% (lower bound) & 0.0101 (0.0015) & 0.0471 (0.0064) & 0.937 (0.0078) \\
10\% (baseline)   & 0.0102 (0.0015) & 0.0513 (0.0068) & 0.950 (0.0081) \\
12\% (upper bound)& 0.0103 (0.0016) & 0.0554 (0.0071) & 0.962 (0.0085) \\
\bottomrule
\end{tabular}
\begin{tablenotes}
\small
\item The $\hat{\gamma}$ range of $[0.0471, 0.0554]$ corresponds to a capital-inflow
attenuation of $[38\%,42\%]$ under the AI-technology shock experiment, confirming that
the 40\% attenuation result is robust to this source of uncertainty.
\end{tablenotes}
\end{threeparttable}
\end{table}

\end{document}